\newlength{\continueindent}
\newcommand*{\ALG@customparshape}{\parshape 2 \leftmargin \linewidth \dimexpr\ALG@tlm+\continueindent\relax \dimexpr\linewidth+\leftmargin-\ALG@tlm-\continueindent\relax}
\apptocmd{\ALG@beginblock}{\ALG@customparshape}{}{\errmessage{failed to patch}}
\algnewcommand\algorithmicstack{\textit{Stack:}}
\algnewcommand\Stack{\item[\algorithmicstack]}
\algnewcommand\algorithmicchoosetimegrid{\textit{Choose time grid:}}
\algnewcommand\Choosetimegrid{\item[\algorithmicchoosetimegrid]}
\algnewcommand\algorithmicchoosetimebasis{\textit{Choose time basis:}}
\algnewcommand\Choosetimebasis{\item[\algorithmicchoosetimebasis]}
\algnewcommand\algorithmicbuilddiscretedata{\textit{Build data at time $t_i$:}}
\algnewcommand\Builddiscretedata{\item[\algorithmicbuilddiscretedata]}
\algnewcommand\algorithmicfit{\textit{Fit:}}
\algnewcommand\Fit{\item[\algorithmicfit]}
\algnewcommand\algorithmicpredict{\textit{Predict:}}
\algnewcommand\Predict{\item[\algorithmicpredict]}
\newcommand\independent{\protect\mathpalette{\protect\independenT}{\perp}}
\def\independenT#1#2{\mathrel{\rlap{$#1#2$}\mkern2mu{#1#2}}}
\newcommand{\argmax}{\text{argmax}}
\newcommand{\argmin}{\text{argmin}}
\newtheorem{theorem}{Theorem}
\newtheorem{lemma}{Lemma}
\newtheorem{corollary}{Corollary}
\newtheorem{condition}{Condition}
\newtheorem{assumption}{Assumption}
\theoremstyle{definition}
\titleformat{\section}{\large\scshape\bfseries}{\thesection.}{1em}{}
\titleformat{\subsection}{\normalfont\bfseries}{\thesubsection.}{1em}{}
\newcommand*{\addFileDependency}[1]{
  \typeout{(#1)}
  \@addtofilelist{#1}
  \IfFileExists{#1}{}{\typeout{No file #1.}}
}
\newcommand*{\myexternaldocument}[1]{%
    \externaldocument{#1}%
    \addFileDependency{#1.tex}%
    \addFileDependency{#1.aux}%
}
\DeclareMathOperator*{\logit}{logit}
\title{Semiparametric logistic regression for inference on relative vaccine efficacy in case-only studies with informative missingness}
 \author[1, *]{Lars van der Laan}
 \author[2,3]{Peter B. Gilbert}
\affil[1]{%
    University of Washington, Statistics, 
Seattle, WA, USA.}
\affil[2]{%
   Fred Hutchinson Cancer Center, Biostatistics, 
Seattle, WA, USA.}
\affil[3]{%
  University of Washington, Biostatistics, Seattle, WA, USA.}
 \affil[*]{Corresponding author: lvdlaan@uw.edu}
\providecommand{\keywords}[1]
{
  \small	
  \textbf{\textit{Keywords---}} #1
}
\begin{document}
\maketitle
 
 \begin{abstract}

We develop semiparametric methods for estimating subgroup-specific relative vaccine efficacy against multiple viral strains in a partially vaccinated population. Focusing on observational case-only studies, we address informative missingness in strain type due to vaccination status, pre-vaccination characteristics, and post-infection factors such as viral load. We establish general conditions for the nonparametric identification of relative conditional vaccine efficacy between strains using covariate-adjusted conditional odds ratio parameters. Assuming a log-linear parametric form for strain-specific conditional vaccine efficacy, we propose targeted maximum likelihood estimators based on partially linear logistic regression, leveraging machine learning for flexible confounding adjustment. Finally, we apply our methods to estimate relative strain-specific conditional vaccine efficacy in the ENSEMBLE COVID-19 vaccine trial.

\end{abstract}

 \keywords{Causal inference, semiparametric logistic regression, partially linear, missing outcomes, targeted maximum likelihood, debiased machine learning, case-only, vaccine efficacy, COVID-19}
\section{Introduction}

For the development of vaccines against genetically diverse pathogens, ``sieve analysis" research seeks an understanding of how the protective efficacy of a vaccine depends on the genetic features of the disease-causing pathogen (Gilbert, Self, Ashby, 1998; Gilbert, Lele, Vardi, 1999; Follmann and Huang, 2018).\nocite{gilbert1998statistical,gilbert1999maximum,follmann2018sieve}  Understanding differential vaccine efficacy by pathogen genotype is important for regulatory approval and public health deployment of vaccines, as well as for guiding optimization of the pathogen strains to include in vaccine constructs. While sieve analysis is most rigorously based on randomized, placebo-controlled vaccine efficacy trials, once highly efficacious vaccines are available, sieve analysis must often be based on non-randomized observational studies, as is the case for SARS-CoV-2, the pathogen motivating this current methodological research (Rolland and Gilbert, 2021).\nocite{rolland2021sieve} 
Given the rapid emergence of the SARS-CoV-2 variants, such as delta and omicron,
maximally rigorous and robust statistical methods for sieve analysis of SARS-CoV-2 observational studies are needed.

Observational study designs that could be applied to SARS-CoV-2 vaccine sieve analysis include prospective and retrospective cohort designs, test-negative case-control designs, and various case-only designs (e.g., \cite{FiremanCasesOnly,dai2018case, patel2021evaluation}) where a ``case" is infection with SARS-CoV-2 diagnosed by RNA-PCR that may also require having certain
COVID-19 symptoms.
The current research focuses on case-only designs, which, after detecting cases via surveillance, measure the case-causing genotypes and ascertain the vaccine statuses of cases. 
The typical analysis method -- multivariable logistic regression -- compares vaccination status among disease cases with two different genotypes, where the regression aims to correct for confounding of the association of vaccination status with the case-causing genotype. This approach has been used for vaccines for several pathogens (e.g., Verani et al., 2015)\nocite{verani2015indirect} 
including for studies conducted through the Centers for Disease Control and Prevention's Emerging Infections Program. Three limitations of this analysis method are: (1) confounding control is achieved through parametric modeling such that model misspecification can bias estimation; (2) informative outcome/genotype missingness can bias parameter and variance estimation; and (3) the parametric estimator can be unstable or even ill-defined in settings with high-dimensional confounders.

 We robustify this approach by replacing 
standard logistic regression with a semiparametric model that can handle missing data and allows the conditional odds ratio to vary with participant covariates and time of infection detection. Specifically, we develop two targeted maximum likelihood estimators (TMLEs) \citep{vanderLaanRose2011} to estimate the relative strain-specific conditional vaccine efficacy, using machine learning to estimate nuisance parameters. The first estimator adjusts for outcome-missingness confounding by leveraging pre-vaccination baseline variables. In contrast, the second estimator adjusts for outcome-missingness confounding using both pre-and-post-infection variables, making it more versatile. As another novel contribution, we provide nonparametric identification results for an ideal conditional relative vaccine efficacy population estimand. We also demonstrate that under further causal assumptions, the conditional odds ratio parameter can identify a conditional relative efficacy parameter under a hypothetical vaccine intervention.

Our work builds upon previous studies on partially linear logistic regression models. Specifically, \cite{OddsRatioreadingsTMLE}, \cite{TchetgenOddsRatio}, and \cite{liu2021double} have all explored inference in such models without missingness. Of note, the first estimator we propose for pre-treatment informed missingness closely aligns with the TMLE algorithm for partially linear logistic regression models without missingness, as outlined in the unpublished review paper \cite{OddsRatioreadingsTMLE}. Another study relevant to our research is \cite{tchetgen2010semiparametric}, which introduces a semiparametric estimator for examining the statistical interaction between genetic and environmental factors on the risk of a dichotomous disease status in a case-only design. However, our focus differs from theirs, as we are primarily interested in the conditional odds ratio between treatment and outcome rather than statistical interactions between two covariates on the outcome. Additionally, our study accounts for outcome missingness and utilises machine learning for nuisance estimation, which are important considerations for our research.

Our methods are particularly useful in two types of scenarios: (1) data-rich settings that can benefit from the use of flexible and powerful machine-learning tools, and (2) high-dimensional settings where ordinary logistic regression may yield unreliable results. For certain machine-learning algorithms, such as random forests \citep{breiman2001random} and gradient-boosting \citep{friedman2001greedy}, thousands of samples with hundreds of endpoints may be necessary for achieving satisfactory performance. However, other algorithms such as regularized logistic regression \citep{tibshirani1996regression}, multivariate adaptive regression splines \citep{friedman1991multivariate}, generalized additive models \citep{hastie1987generalized}, and the highly adaptive lasso \citep{HAL2016} can achieve good results with only hundreds of samples if properly tuned. Our method allows for variable-selection techniques and lasso-regularised logistic regression \citep{tibshirani1996regression} to estimate nuisance components of the data-generating distribution. As a result, our approach allows for inference even when data-driven variable selection is performed for confounding adjustment.
 


\section{Problem formulation and population estimand of interest}
\subsection{Setup and data-structure}

We consider a target population $\mathcal{P}$ of $N\approx \infty$ independently and identically distributed individuals, such as all inhabitants of a city or all inhabitants of a city not previously infected by the pathogen. Suppose that individuals in $\mathcal{P}$ are at risk of being infected by a viral disease with several different viral variants, and a vaccine is available to all individuals in $\mathcal{P}$ on an opt-in basis. To assess the vaccine's relative performance against different viral strains, an observational monitoring site offering optional viral-infection testing services is open to members of $\mathcal{P}$ starting from time $t=0$, which we define as the monitoring site's opening date. When an individual tests positive for infection at the monitoring site, i.e. becomes a case, various data are recorded, including individual features such as vaccination status, and features of the virus infection, such as viral load and virus strain type. No data are recorded from individuals who test negative for infection. We assume that the time between the initial infection and diagnosis at the monitoring site is negligible, so the time of infection is known with certainty if an infection is diagnosed. Targeted to our ENSEMBLE application context, we also assume that no one in $\mathcal{P}$ is infected before time $t=0$ (otherwise, we would redefine the population).

We define $\mathbb{T}$ as the set of all possible infection times. Each individual in the target population $\mathcal{P}$ is represented by the unobserved (full) data structure $O_F \equiv (W, A, T, W_T, J) \sim P_F$, where $W \in \mathbb{R}^d$ denotes baseline covariates determined before the base time $t=0$, $A$ is a binary indicator taking value 1 if the individual was vaccinated before the time of infection, $T$ is the time of the individual's first viral infection after the monitoring site at $t=0$, $W_T$ represents post-infection covariates measured upon visiting the monitoring site, and $J$ is a binary feature of the viral infection that only applies to infected individuals (a ``mark" feature). We also define the time-dependent vaccination status indicator $A(t)$, which takes the value 1 if the individual is considered vaccinated immediately before time $t$. We allow $A(t)$ to change from 1 to 0, modeling vaccine efficacy waning coarsely. In this manuscript, we assume $J$ represents the indicator variable that the infection-causing virus is of a specific strain, and assume that multiple distinct strains do not cause infection, which is supported for COVID-19. Examples of baseline covariates $W$ include age, sex, history of previous infection, and living location. If $A$ is a binary variable representing two different vaccines, $W$ may also include time since vaccination and other shared vaccine characteristics. 

\begin{figure}[H]
    \centering
    \includegraphics[width=0.7\linewidth]{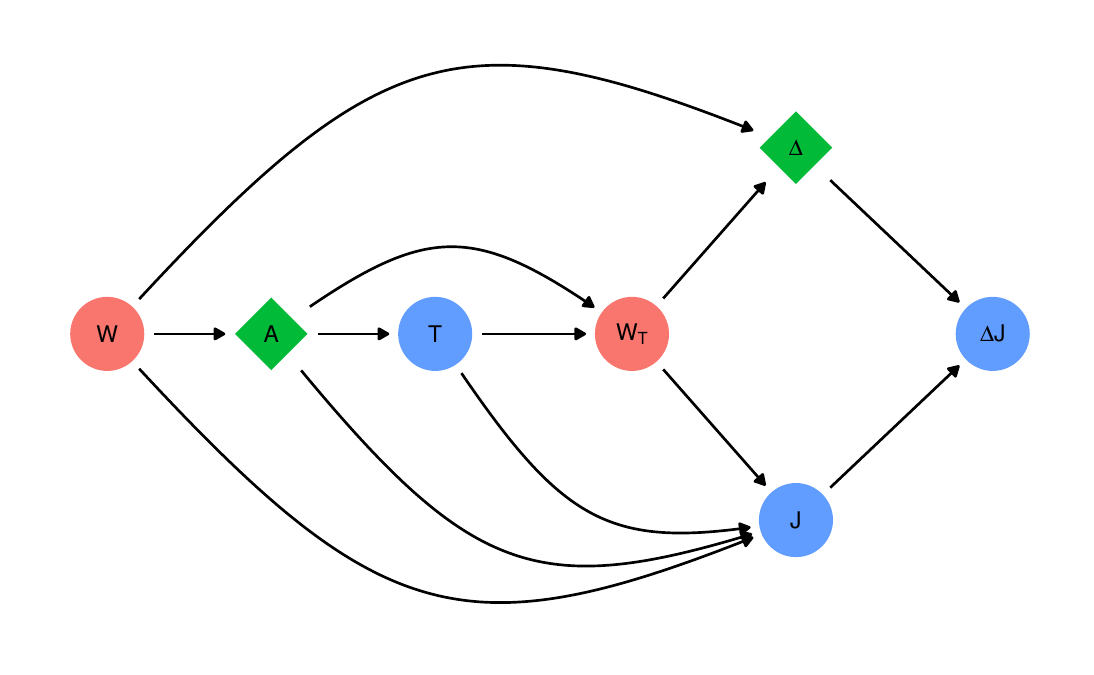}
    \caption{A directed acyclic graph (DAG) representing a possible causal model for the underlying data-generating process with strain missingness informed by post-treatment variables. Confounders are colour-coded red, interventional variables are colour-coded green and diamond-shaped, and outcomes are colour-coded blue.}
    \label{fig::DAG}
\end{figure}

As individuals have the option to visit the monitoring site for testing and we only observe some cases, we cannot observe the full data structure $O_F$ for all infected members of the target population. Furthermore, baseline variables, vaccine status, and time-of-infection may inform this missingness. Even for individuals whose infection is observed, the outcome variable $J$ may not be observed due to measurement issues or resource limitations. The missingness of $J$ may also be influenced by post-infection covariates $W_T$, which can be causally affected by vaccine assignment or infection time $T$. For example, failed viral sequencing due to low viral load at infection time could lead to missing $J$. We use the indicator variable $R$ to denote whether an individual's viral infection is observed at the monitoring site and $ \Delta$ to denote whether the post-infection outcome variable $J$ is recorded. Therefore, the observed case-only data structure is $O \equiv R(W,A,T,W_T,\Delta,\Delta J) \sim P$, where $O$ equals $(W,A,T,W_T,\Delta,\Delta J)$ if $R=1$ and is an empty set otherwise. Also, $\Delta J = J$ if $\Delta = 1$, and $\Delta J = 0$ otherwise. We present one possible causal model for the data-generating process in Figure \ref{fig::DAG}.


The methods presented in this manuscript are not dependent on the specific interpretation of $R$, $W$, $A$, $T$, $W_T$, $\Delta$, and $J$, and can be applied to many other scenarios. For example, in addition to baseline variables $W$, post-infection variables such as symptoms and severity of the infection may be included in $W$. However, since vaccination status may influence these variables, caution should be exercised in their causal interpretation. Other applications may focus exclusively on vaccinated individuals and define $A$ in terms of the type of vaccine, the number of vaccinations, or the antibody response to vaccination.



 \subsection{Notation}

 Recall that $P_F$ and $P$ are the data-generating distributions of $O_F$ and $O$, respectively. Let $\|\,\cdot\,\|$ denote the $L^2(P)$ norm. For a function $f \in L^2(P)$, we use the empirical process notation $Pf = \int f dP$ and $P_n f = \int f dP_n$ where $P_n$ denotes the empirical measure induced by a sample of $n$ observations of $O$. Throughout this text, we denote a possible realisation of the random variable $O= R(W, A, T, W_T, \Delta, \Delta J)$ by the lowercase letters $o = (a, w, t, w_T, \delta, \delta j)$. For a given distribution $P'$ and realisation $o$ of $O$, denote the conditional probability of the reference strain given its parents as $\overline{\mu}_{P'}(a,w,t,w_T) := {P'}(J=1| \Delta = 1,R=1, W_T = w_T,A=a, W=w, T=t)$. Denote the conditional vaccination probabilities as $\pi_{P'}(w,t) := {P'}(A=1|W=w, R=1, T=t)$ and $\widetilde{\pi}_{P'}(w,t) := {P'}(A=1\mid R=1, \Delta=1, W=w,T=t)$, and denote the conditional strain nonmissingness probability as $\Pi_{P'}(a,w,t,w_T) := {P'}(\Delta = 1 \mid R=1, W_T = w_T, A=a, W=w, T=t)$. Denote ${\mu}_{P'}(a,w, t) := {P'}(J=1|R=1, A=a, W=w, \Delta = 1, T=t)$ and ${\mu}_{adj,P'}(a,w,t) := E_{P'}[ \overline{\mu}_{P'}(a,w, t,w_T)|R=1, A=a,W=w, T=t]$. We define the conditional odds ratio $OR(P')(w,t)$ and adjusted conditional odds ratio $OR_{adj}(P')(w,t)$ by
\begin{align*}
OR(P')(w,t) &:= \frac{\mu_{P'}(1,w,t)/(1-\mu_{P'}(1,w,t))}{\mu_{P'}(0,w,t)/(1-\mu_{P'}(0,w,t))} \\
OR_{adj}(P')(w,t) &:= \frac{\mu_{adj, P'}(1,w,t)/(1-\mu_{adj, P'}(1,w,t))}{\mu_{adj, P'}(0,w,t)/(1-\mu_{adj, P'}(0,w,t))} .
\end{align*}

\subsection{The relative strain-specific conditional vaccine efficacy and semiparametric model assumptions}
For strain type $j \in \{0,1\}$ and realisation $(w,a)$ of $(W,A(t))$ at time $t\in\mathbb{T}$, let 
$$\lambda_j(t \mid A(t)=a, W=w) := \lim_{h \downarrow 0} \frac{1}{h}P_F(J=j, T \in [t, t+h)\mid A(t) = a,\, W=w,\, T \geq t)$$

\noindent be the cause-specific conditional hazard function. This function represents the probability of an individual acquiring the $j$th viral strain at time $t$, given that they had vaccination assignment $a$ before time $t$ and were not infected before time $t$, based on their baseline information. We also define the strain-specific conditional vaccine efficacy for each strain type $j \in \{0,1\}$ as $VE_j(t \mid w) = 1- \exp \left\{ \theta_j(t\mid w)\right\}$, where $\theta_j(t\mid w) = \log \left\{ \lambda_j(t \mid A(t)=1, W=w) \right. /$ \newline $ \left. \lambda_j(t \mid A(t)=0, W=w)\right\}$.

Suppose we observe $n$ i.i.d. case-only instances $(O_1, O_2, \dots, O_n)$ of the data-structure $O \mid (R=1) \sim P  $, and we wish to compare the efficacy of a vaccine against strain $J=1$ with that against strain $J=0$. Since we can only observe individuals who have contracted the viral infection, i.e. cases, we cannot determine the strain-specific conditional vaccine efficacy parameters $VE_0(t \mid w)$ and $VE_1(t \mid w)$ from the data-generating distribution $P$ of the observed data. As such, we instead consider the relative strain-specific conditional vaccine efficacy parameter,
\begin{equation}
\label{eqn::relVacEff}
\Psi_F(P_F)(w,t) := \frac{1-VE_1(t \mid w)}{1-VE_0(t \mid w)}.
\end{equation}
We can express this parameter as the conditional odds ratio between $J$ and $A$ under the target population: $\Psi_F(P_F)(w,t) = OR_F(P_F)(w,t)$ where
\begin{align}
& \hspace{-0.5cm} OR_F(P_F)(w,t) := \frac{P_F(J=1|A=1, W=w, T = t)/P_F(J=0|A=1, W=w, T=t)  }{P_F(J=1|A=0, W=w, T=t) /P_F(J=0|A=0, W=w, T=t)},  \label{eqn::relVacEffOddsRatio}
\end{align}
which follows from Bayes rule and that $A(t) = A$ on the event $\{T=t\}$. This identity implies that the estimand $\Psi_F(P_F)(w,t)$ is determined by the conditional distribution of the strain type among population cases, given baseline covariates $W$, infection time $T$, and vaccination status at the time of infection. Under certain missing-data assumptions, we can identify $\Psi_F(P_F)$ from the data-generating distribution $P$ of the observed data.

In this manuscript, we develop semiparametric and asymptotically linear estimators for $OR_F(P_F)$ under two sets of missing-data assumptions for $P_F$ given in the next section, recognising that this provides estimators for the relative strain-specific conditional vaccine efficacy, $(w,t) \mapsto \Psi_F(P_F)(w,t)$. We assume that $OR_F(P_F)$, and thus $\Psi_F(P_F)$, is log-linear with a functional form that is known up to a finite-dimensional coefficient vector $\beta_{F}(P_F)$. That is, we assume that there exists a {known} vector-valued function $\underline{f}: \mathbb{R}^d\times \mathbb{T} \mapsto \mathbb{R}^s$ and an unknown coefficient vector $\beta_{F}(P_F) \in \mathbb{R}^s$ such that
\begin{equation}
    \label{eqn::parametricForm}
    \log OR_F(P_F)(w,t) = \beta_F(P_F)^T \underline{f}(w,t).
\end{equation}
Equation (\ref{eqn::parametricForm}) implies a semiparametric model assumption on $P_F$, which is the same as a partially linear logistic regression model on the outcome regression function $P_F(J=1 \mid A = a, W = w, T = t)$, as described by Tchetgen Tchetgen et al. (2010). Thus, we can replace the model assumption in Equation (\ref{eqn::parametricForm}) with the following equation:
\begin{equation}
    \text{logit}\left\{P_F(J=1 \mid A = a, W = w, T = t) \right\} = a\beta_F(P_F)^T \underline{f}(w,t) + h_{P_F}(w,t),\label{eqn::partiallyLinear}
\end{equation}
where $h_{P_F}(w,t) := \text{logit}\left\{P_F(J=1 \mid A = 0, W = w, T = t) \right\}$ remains unspecified.

Nonparametric modeling of $(w,t) \mapsto OR_F(P_F)(w,t)$, while possibly of interest for some applications, makes interpretation and inference difficult. This is because the parameter $P_F \mapsto OR_F(P_F)(w,t)$ is not pathwise differentiable under a nonparametric statistical model, which means that root-$n$ consistent, regular, and asymptotically-normal estimators cannot be obtained without additional assumptions on the functional form of $OR_F(P_F)$ (Bickel et al., 1993). On the other hand, parametric modeling of $OR_F(P_F)$ allows for the development of root-$n$ consistent and semiparametric efficient estimators using techniques from efficiency theory (Bickel et al., 1993; van der Laan, Robins, 2003; van der Laan, Rose, 2011), while still leaving nuisance components of the data-generating distribution unspecified. Fortunately, the relative conditional vaccine efficacy is an interpretable feature whose functional form may be plausibly modeled using domain knowledge. For instance, it might be known based on the analysis of vaccine clinical trials that the strain-specific vaccine efficacy is close to constant across different baseline covariate subgroups, i.e $\log OR_F(P_F)(w,t) \approx \beta_{F}(P_F)$ for some $\beta_F(P_F) \in \mathbb{R}$. Regardless, $\log OR_F(P_F)(w,t)$ could still be relatively flexibly modeled using, for instance, a finite number of cubic spline basis functions. We make no further assumptions on the functional form of $P_F$ to alleviate unnecessary biases, since usually little is known about the functional form of other features of $P_F$ and $P$.

Before proceeding, it is important to note that we focus on the inference of a conditional version of the relative strain-specific vaccine efficacy. This is intentional, as marginal versions of vaccine efficacy are often not identifiable due to potential differences in the covariate distribution between observed cases and the overall population. Moreover, utilizing the conditional relative strain-specific vaccine efficacy enables the discovery of vaccine subgroup effects, and it may be more transportable to populations with varying baseline covariate distributions.

\section{Nonparametric identification of the target parameter}
In view of the missingness indicators $R$ and $\Delta$, the observed data-structure $O$ is a coarsening of the full data-structure $O_F$. As such, the target estimand $OR_F(P_F)$ is not necessarily identified from the probability distribution $P$ of $O$. In this section, we introduce causal (missing-data) assumptions that enable identification of $OR_F(P_F)$ from the data-generating distribution of $O$. Moreover, we provide a set of (strong) causal assumptions on the vaccine efficacy mechanism that allows for interpretation of $OR_F(P_F)$ as a measure of relative vaccine efficacy under a hypothetical vaccine intervention on the target population. These identification results are nonparametric and do not require assumptions about the functional forms of $P$ and $P_F$.

\subsection{Identification of population parameter from observed data distribution}

In this section, we establish conditions for identifying the conditional odds ratio in (\ref{eqn::relVacEffOddsRatio}), and thus the relative conditional vaccine efficacy parameter in (\ref{eqn::relVacEff}), from a parameter in the observed data-generating distribution. We provide two sets of causal assumptions that enable identification, with the second set being a weaker version of the first. The first set assumes that strain missingness is informed only by baseline variables and vaccination status, while the second set allows post-infection variables also to influence strain missingness. We present the first identification result below.

\begin{assumption}
\textit{Weak overlap for vaccination status:} $P\left(1 > P(A= 1| W,R=1) > 0\right)=1$.\label{cond::ident::C0}
\end{assumption}
\begin{assumption}
\textit{Weak overlap for strain missingness:} $P\left(P(\Delta = 1|A, W,R=1, T=t) > 0\right)=1$.\label{cond::ident::C1}
\end{assumption}
\begin{assumption}
\textit{Weak overlap for case missingness:} $P_F\left(P_F(R = 1| A, W, T=t) > 0\right)=1$.\label{cond::ident::C2}
\end{assumption}
\begin{assumption}
\textit{Cases are missing at random in the population:} $R \independent J | A, W, T =t$. 
\label{cond::ident::C4}
\end{assumption}
\begin{assumption}
\textit{Strain is missing at random among observed cases:} $\Delta \independent J | A, W,T = t, R=1$. \label{cond::ident::C3}
\end{assumption}

\begin{theorem}
Under assumptions \ref{cond::ident::C0}-\ref{cond::ident::C3}, the target estimand $OR_F(P_F)(w,t)$ defined at (\ref{eqn::relVacEffOddsRatio}) is identified from $P$ by the conditional odds ratio,
$$OR(P)(w,t) := \frac{\mu_P( a=1, w, t)/(1-\mu_P(a=1, w, t))}{\mu_P(a=0, w, t) / (1-\mu_P(a=0, w, t))}.$$
\label{theorem::ident}
\end{theorem}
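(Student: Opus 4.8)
The plan is to prove the stronger pointwise identity $\mu_P(a,w,t) = P_F(J=1 \mid A=a, W=w, T=t)$ for each $a \in \{0,1\}$; once this holds, equality of the two odds ratios is immediate, since $OR(P)(w,t)$ and $OR_F(P_F)(w,t)$ are the same deterministic function of these conditional means. The argument peels off the two layers of missingness, first $\Delta$ and then $R$, invoking one missing-at-random assumption at each layer, while the overlap assumptions serve only to guarantee that every conditioning event has positive probability.

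First I would pass from the observed-data law $P$ to the full-data law $P_F$ on the event where $J$ is actually recorded. On $\{R=1,\Delta=1\}$ the observed structure $O$ contains $J$ and reproduces $(W,A,T)$ exactly, so $P$ is the pushforward of $P_F$ there and $\mu_P(a,w,t) = P_F(J=1 \mid R=1, \Delta=1, A=a, W=w, T=t)$; Assumptions~\ref{cond::ident::C1} and~\ref{cond::ident::C2} guarantee $P(\Delta=1 \mid A=a,W=w,R=1,T=t)>0$ and $P_F(R=1 \mid A=a,W=w,T=t)>0$, so this conditioning event has positive probability. I would then strip the two selection variables in turn, working under $P_F$ where $J$ is always defined: Assumption~\ref{cond::ident::C3} ($\Delta \independent J \mid A, W, T=t, R=1$) removes the conditioning on $\Delta=1$, giving $P_F(J=1 \mid R=1, A=a, W=w, T=t)$, and Assumption~\ref{cond::ident::C4} ($R \independent J \mid A, W, T=t$) then removes the conditioning on $R=1$, giving $P_F(J=1 \mid A=a, W=w, T=t)$. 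Chaining the three equalities delivers the claimed identity, and Assumption~\ref{cond::ident::C0} ensures both vaccination arms $a=0,1$ occur with positive probability so that $\mu_P(a,w,t)$ is defined at both levels.

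To conclude, I would substitute into the definition of $OR(P)(w,t)$. Since $\mu_P(a,w,t) = P_F(J=1 \mid A=a, W=w, T=t)$ and therefore $1-\mu_P(a,w,t) = P_F(J=0 \mid A=a, W=w, T=t)$, each factor $\mu_P(a,w,t)/(1-\mu_P(a,w,t))$ equals the full-data conditional odds $P_F(J=1 \mid A=a, W=w, T=t)/P_F(J=0 \mid A=a, W=w, T=t)$; taking the ratio of the $a=1$ and $a=0$ factors reproduces $OR_F(P_F)(w,t)$ exactly. The main obstacle is the middle step: one must argue carefully that conditional probabilities computed under the observed-data law $P$ coincide with those computed under the full-data law $P_F$ on the selected subpopulation $\{R=1,\Delta=1\}$, and keep straight which independence statement lives under which distribution and conditioning event, applying the two MAR conditions in the full-data world to avoid conditioning on an unobserved $J$. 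The overlap conditions play only the supporting role of keeping all conditionals and denominators well-defined; the substantive content of the identification resides entirely in the two missing-at-random assumptions~\ref{cond::ident::C4} and~\ref{cond::ident::C3}.
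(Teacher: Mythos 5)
Your proposal is correct and follows essentially the same route as the paper's proof: peel off the conditioning on $\Delta=1$ via Assumption~\ref{cond::ident::C3}, then the conditioning on $R=1$ via Assumption~\ref{cond::ident::C4}, with the overlap conditions only ensuring the conditionals are well-defined. Your version is if anything slightly cleaner, since you establish the pointwise identity $\mu_P(a,w,t)=P_F(J=1\mid A=a,W=w,T=t)$ directly and are explicit about the $P$-versus-$P_F$ bookkeeping, whereas the paper routes the last step through Bayes' rule and a cancellation of $P(T=t\mid A,W)$ factors in the quotient.
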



 We now discuss the assumptions of Theorem \ref{theorem::ident}. Assumptions \ref{cond::ident::C0}-\ref{cond::ident::C2} are standard positivity assumptions that ensure that $OR(P)$ and $OR_F(P_F)$ are well-defined. The first assumption guarantees a positive probability of observing each vaccination status among the cases in each stratum of $W$. The second assumption stipulates that among the observed cases ($R=1$) infected at time $T=t$, there is a positive probability of observing $J$ in all strata of $(A,W)$. Similarly, the third assumption requires that among all members of $\mathcal{P}$ who were infected at time $T=t$ there was a positive probability of visiting the monitoring site, being tested positively for infection, and having the data-structure $(W,A,T)$ recorded. Assumption \ref{cond::ident::C3} states that the outcome $J$ is missing-at-random among observed infections at time $T=t$ within strata of $A$ and $W$. This assumption is violated if the missingness of $J$ is caused by post-infection variables that predict $J$. In such cases, adjusting for these variables by including them in $W$ may not lead to a causally interpretable relative conditional vaccine efficacy measure. Assumption \ref{cond::ident::C4} is a strong assumption and is essential for identifying the relative conditional vaccine efficacy of the population. It requires that, at time $t$ and within all strata of $A$ and $W$, the distribution of the viral strain type is identical among visitors and non-visitors to the monitoring site. 
 This assumption would be violated if different viral strains cause different symptoms, affecting an individual's decision to visit the monitoring site for testing. In this case, the strain type $J$ causally affects its missingness $R$, inducing confounding bias that may not be adjustable using the observed data. To increase the plausibility of Assumptions \ref{cond::ident::C2} and \ref{cond::ident::C4}, we can redefine the population $\mathcal{P}$ as individuals who are likely to visit the monitoring site for testing or redefine the endpoint as severe viral infection. For example, suppose individuals with severe symptoms are less likely to consider their symptoms when deciding whether or not to visit the monitoring site. In that case, a possible solution is to redefine the endpoint $J$ as an indicator of being infected with a particular viral strain and having a symptom severity level exceeding a predetermined threshold.

\vspace{0.5cm}

We present assumptions that relax Assumption \ref{cond::ident::C3} and enable identification of $\Psi_F(P_F)$ in a broader context where the post-infection covariates $W_T$, which also inform $J$, determine the missingness of $J$.

\begin{assumption}
 \textit{Weak overlap for strain missingness:} $P_F\left(P(\Delta = 1|W_T, A, W,R=1, T=t) > 0\right)=1$. \label{cond::ident::D1}
 \end{assumption}
\begin{assumption}
\textit{Strain is missing at random after adjusting for $W_T$:} $\Delta \independent J | W_T,A, W,   R=1,T =t$. \label{cond::ident::D3}
\end{assumption}



\begin{theorem}
Suppose assumptions \ref{cond::ident::C0}, \ref{cond::ident::C2}, \ref{cond::ident::C4} and \ref{cond::ident::D1} and \ref{cond::ident::D3} hold. Then, the target estimand (\ref{eqn::relVacEff}) is identified from $P$ by the adjusted conditional odds ratio,
$$OR_{adj}(P)(w,t) := \frac{\mu_{adj,P}(a=1, w, t)/(1-\mu_{adj,P}(a=1, w, t))}{\mu_{adj,P}(a=0, w, t) / (1-\mu_{adj,P}(a=0, w, t))}.$$
 
\label{theorem::ident2}
\end{theorem}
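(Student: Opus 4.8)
The plan is to mirror the proof of Theorem \ref{theorem::ident}, establishing the chain of equalities $OR_F(P_F)(w,t) = OR_{adj}(P)(w,t)$ by showing that the underlying strain probabilities agree. The argument factors into two steps corresponding to the two layers of coarsening: first I would strip off the case-missingness indicator $R$ using the population missing-at-random Assumption \ref{cond::ident::C4}, and then strip off the strain-missingness indicator $\Delta$ using Assumption \ref{cond::ident::D3}, where the relaxation from \ref{cond::ident::C3} to \ref{cond::ident::D3} forces an additional averaging step over the post-vaccination covariate $W_T$.

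First I would address $R$. Because $OR_F(P_F)$ is a function only of the conditional laws $P_F(J=\cdot \mid A=a, W=w, T=t)$, and Assumption \ref{cond::ident::C4} gives $R \independent J \mid A, W, T=t$, conditioning on the event $\{R=1\}$ leaves these laws unchanged; positivity Assumption \ref{cond::ident::C2} guarantees this conditioning event has positive probability within the relevant strata. Hence $OR_F(P_F)(w,t)$ equals the odds ratio built from $P(J=1\mid R=1, A=a, W=w, T=t)$ for $a \in \{0,1\}$, so we may work entirely within the observed case population.

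The crux is Step 2. Under Theorem \ref{theorem::ident}, Assumption \ref{cond::ident::C3} let one identify $P(J=1\mid R=1, A=a, W=w, T=t)$ directly as $\mu_P(a,w,t)$; here \ref{cond::ident::C3} fails, so I would instead invoke the tower property to write
\begin{equation*}
P(J=1\mid R=1, A=a, W=w, T=t) = E\!\left[\,P(J=1\mid R=1, W_T, A=a, W=w, T=t)\;\middle|\; R=1, A=a, W=w, T=t\right],
\end{equation*}
where the outer expectation is over the law of $W_T$ given $\{R=1, A=a, W=w, T=t\}$. Assumption \ref{cond::ident::D3}, namely $\Delta \independent J \mid W_T, A, W, R=1, T=t$, together with the overlap Assumption \ref{cond::ident::D1}, shows that the inner conditional probability is unchanged by further conditioning on $\{\Delta=1\}$ and therefore equals the observable quantity $\overline{\mu}_P(W_T, t, a, w)$. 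Substituting yields exactly $\mu_{adj,P}(a,w,t)$, whose definition averages $\overline{\mu}_P$ over $W_T$ given $\{R=1, A=a, W=w, T=t\}$.

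I expect the main subtlety — and the only place the argument could go wrong — to lie in matching the two $W_T$-averaging distributions in Step 2. The quantity $\overline{\mu}_P$ is identified using only the strain-observed subsample $\{\Delta=1\}$, but it must be integrated against the law of $W_T$ among all cases $\{R=1\}$, not among $\{\Delta=1\}$ cases; since $\Delta$ is permitted to depend on $W_T$, these two laws genuinely differ, and using the wrong one would bias the identification. What rescues the argument is that $W_T$ is fully recorded whenever $R=1$, so its conditional law given $\{R=1, A=a, W=w, T=t\}$ is itself identified from $P$. Combining Steps 1 and 2 gives $P(J=1\mid R=1, A=a, W=w, T=t) = \mu_{adj,P}(a,w,t)$ for each $a$, and feeding this into the odds-ratio map yields $OR_F(P_F)(w,t) = OR_{adj}(P)(w,t)$; Assumptions \ref{cond::ident::C0}, \ref{cond::ident::C2}, and \ref{cond::ident::D1} ensure that all the conditional probabilities and the final odds ratio are well-defined.
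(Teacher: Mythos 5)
Your proposal is correct and follows essentially the same route as the paper: the tower property over $W_T$ combined with Assumption \ref{cond::ident::D3} to insert the conditioning on $\{\Delta=1\}$ (yielding $\mu_{adj,P}$), Assumption \ref{cond::ident::C4} to remove the conditioning on $\{R=1\}$, and the overlap assumptions to guarantee well-definedness. Your explicit discussion of which law of $W_T$ the averaging is taken against (all cases $\{R=1\}$, not the strain-observed subsample) is a point the paper leaves implicit, but the substance of the argument is identical.
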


Assumption \ref{cond::ident::D1} requires that assumption \ref{cond::ident::C1} remains true after conditioning on the post-infection variables $W_T$. Assumption \ref{cond::ident::D3} is a weaker version of Assumption \ref{cond::ident::C3}, as it allows for the possibility that the missingness of $J$ may be informed by post-infection variables $W_T$ even if these variables causally affect $J$. 
Notably, the DAG depicted in Figure \ref{fig::DAG} implies that the corresponding causal model satisfies Assumption \ref{cond::ident::D3} due to the conditional independence relations it induces. In specific scenarios, such as when viral strain missingness is related to factors like low viral load or the severity of infection symptoms, it may be plausible to satisfy Assumption \ref{cond::ident::D3} by incorporating post-infection variables that capture these factors into the post-infection covariate $W_T$.

 \subsection{Identification of population parameter with causal parameter}

To make the relative conditional vaccine efficacy estimand in Equation (\ref{eqn::relVacEff}) more causally interpretable, we would like to adjust for baseline variables that predict both vaccination status at time $t$, $A(t)$, and the outcomes $(T,J)$. This adjustment would make the estimand $\Psi_F(P_F)$ more transportable across populations with different distributions of baseline covariates and vaccination status. However, because individuals in the real-world population are vaccinated at different times, and their vaccination time may influence both the infection time and the viral strain that causes the infection, $\Psi_F(P_F)$ is generally not interpretable as relative conditional vaccine efficacy from a hypothetical randomized control vaccine trial. In this section, we introduce potentially strong causal assumptions that can be used to show that $\Psi_F(P_F)(w,t)$ corresponds with an instantaneous conditional relative vaccine efficacy under a hypothetical vaccine intervention.


To this end, we make the simplifying assumption that the possible infection times are discrete and integer-valued, e.g., $\mathbb{T} \subset \mathbb{N}$ encodes days passed since baseline $t=0$. Consider a member $O_F \in \mathcal{P}$ of the population that has not acquired viral infection before time $t$ and is not yet vaccinated at time $t$, i.e., $A(t-1) := 0$. Let $(T_t(1),J_t(1))$ be the potential outcomes \citep{Rubin2005} of $(T,J)$ that would be observed under the hypothetical intervention that vaccinates the individual $O_F$ at time $t$. Moreover, let $(T_t(0),J_t(0))$ be the potential outcomes that would be observed if the individual is not vaccinated at time $t$. We assume the causal ordering that $A(t)$ is determined before the potential outcomes $(T_t(a), J_T(a): a \in \{0,1\})$. We stress that the potential outcomes $T_t(a)$ and $J_t(a)$ are only meaningfully defined on the event $\{T \geq t, A(t-1) = 0\}$. Let $O_{F,c} = (O_F, A(t-1), T_t(0), J_t(0), T_t(1), J_t(1)) \sim P_{F,c}$ be the corresponding causal data-structure. Under the following causal assumptions, Theorem \ref{theorem::CausalIdent} establishes that $OR_F(P_F)(w,t) = \Psi_F(P_F)$ can be interpreted as relative conditional vaccine efficacy under a hypothetical randomized controlled vaccine efficacy trial performed among all members of the population who have not yet been infected or vaccinated before time $t$.
 
\begin{assumption}
\textit{Consistency of potential outcomes:} For each $a \in \{0,1\}$, \newline 
$(T_t(a), J_t(a)) \mid \{A(t)=a,\, A(t-1) = 0,  \,T \geq t,\, W\} \,{\buildrel d \over =}\, (T,J)  \mid \{A(t)=a,\, A(t-1) = 0, \, T \geq t,\, W\}.  $
\label{cond::Causalconsistency}
\end{assumption}
\begin{assumption}
\textit{Exchangeability:} For each $a \in \{0,1\}$, $(T_t(a), J_t(a)) \independent  A(t) \mid \{  A(t-1) = 0,\, W,\, T \geq t\}.$\label{cond::CausalExchange}
\end{assumption}
\begin{assumption}
\textit{Markov dependence of outcomes on vaccination history:}  For each $a \in \{0,1\}$, $  (T,J) \independent  A(t-1)  \mid \{A(t)=a,\, T \geq t,\, W\} .$
\label{cond::indepVacTime}
\end{assumption}

 \begin{theorem}
 Under Assumptions \ref{cond::Causalconsistency}-\ref{cond::indepVacTime}, we have $OR_F(w,t) =$
  $$\frac{P_{F,c}(T_t(1) = t, J_t(1) = 1 \mid T \geq t, A(t-1)=0, W =w)/P_{F,c}(T_t(0) = t, J_t(0) = 1 \mid T \geq t, A(t-1)=0, W =w)}{P_{F,c}(T_t(1) = t, J_t(1) = 0 \mid T \geq t, A(t-1)=0, W =w)/P_{F,c}(T_t(0) = t, J_t(0) = 0 \mid T \geq t, A(t-1)=0, W =w)}$$ \label{theorem::CausalIdent}
 \end{theorem}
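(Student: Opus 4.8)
The plan is to transform each potential-outcome joint probability appearing in the statement into an observational joint probability, by peeling off the conditioning events one assumption at a time, and then to exploit the odds-ratio structure to cancel a nuisance time/hazard factor. Throughout I would work with a fixed $a \in \{0,1\}$, $j \in \{0,1\}$ and the generic target quantity $P_{F,c}(T_t(a) = t,\, J_t(a) = j \mid T \geq t,\, A(t-1)=0,\, W=w)$, establishing a single identity for it and then assembling the four pieces at the end.

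First I would apply Exchangeability (Assumption \ref{cond::CausalExchange}) to insert the event $\{A(t)=a\}$ into the conditioning set without changing the probability, since $(T_t(a), J_t(a)) \independent A(t) \mid \{A(t-1)=0,\, W,\, T \geq t\}$. Next I would invoke Consistency (Assumption \ref{cond::Causalconsistency}), which identifies the law of $(T_t(a), J_t(a))$ with that of $(T,J)$ on the event $\{A(t)=a,\, A(t-1)=0,\, T \geq t,\, W\}$; this replaces the potential outcomes by the observed pair $(T,J)$. Finally I would use the Markov dependence assumption (Assumption \ref{cond::indepVacTime}), namely $(T,J) \independent A(t-1) \mid \{A(t)=a,\, T \geq t,\, W\}$, to drop the conditioning on $\{A(t-1)=0\}$. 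After these three steps the target equals $P_F(T=t,\, J=j \mid A(t)=a,\, T \geq t,\, W=w)$.

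The remaining step is to factor this joint probability. Writing it as
$$P_F(J=j \mid T=t,\, A(t)=a,\, W=w) \cdot P_F(T=t \mid A(t)=a,\, T \geq t,\, W=w),$$
and using the identity $A = A(t)$ on the event $\{T=t\}$ (already noted following (\ref{eqn::relVacEffOddsRatio})), the first factor equals $P_F(J=j \mid A=a,\, W=w,\, T=t)$, while the second factor — a discrete-time hazard, call it $c_a$ — depends on $a$ and on $(w,t)$ but \emph{not} on $j$. Hence $P_{F,c}(T_t(a)=t,\, J_t(a)=j \mid \cdots) = c_a\, P_F(J=j \mid A=a,\, W=w,\, T=t)$ for both $j \in \{0,1\}$.

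Substituting this expression for each of the four factors in the causal odds ratio, the hazard constants enter the numerator ratio as $c_1/c_0$ and the denominator ratio again as $c_1/c_0$, so they cancel in the overall double ratio; what remains is exactly $OR_F(P_F)(w,t)$ as defined in (\ref{eqn::relVacEffOddsRatio}). I expect the main obstacle to be conceptual bookkeeping rather than a deep difficulty: one must track that the potential-outcome probabilities are over the joint pair $(T,J)$ rather than the mark alone, verify that the $j$-free hazard factor $c_a$ is genuinely common to the two strain values so that the odds-ratio structure eliminates it, and handle with care the interplay of the events $\{T=t\}$ and $\{T \geq t\}$ together with the identification $A = A(t)$ on $\{T=t\}$ when collapsing the conditioning sets.
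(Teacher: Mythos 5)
Your proposal is correct and follows essentially the same route as the paper's proof: insert $\{A(t)=a\}$ via exchangeability, replace potential outcomes by observed $(T,J)$ via consistency, drop $\{A(t-1)=0\}$ via the Markov assumption, then factor the joint probability into the conditional strain probability times a $j$-free discrete hazard that cancels in the double ratio. The only cosmetic difference is that the paper mislabels the elementary factorization step as ``Markov's inequality,'' whereas you correctly treat it as the multiplication rule for conditional probabilities.
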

  

Assumption \ref{cond::Causalconsistency} is a standard consistency condition for the potential outcomes \citep{Rubin2005, Pearl2009}. Assumption \ref{cond::CausalExchange} assumes no unmeasured confounders between the potential outcomes and vaccination assignment at time $t$, which is also standard in the causal inference literature \citep{Rubin2005, Pearl2009}. Assumption \ref{cond::indepVacTime} is a strong assumption that assumes the distribution of $(T,J)$ among uninfected individuals who receive the vaccine at time $t$ is the same as the distribution among those who had already been vaccinated prior to time $t$. This implies that the likelihood of a vaccinated person acquiring an infection at time $t$ is not affected by the duration since vaccination and that the vaccine's complete protective effect is realised at the time of administration. While these assumptions are typically not met, the Markov-type assumption may be more plausible if previously vaccinated individuals are considered unvaccinated after sufficient time has passed. In addition, if vaccination is replaced with a monoclonal antibody, it is more plausible that the effect of the treatment is realised quickly after its administration. We caution against drawing conclusions from this causal identification beyond that it motivates the benefit of adjusting for covariates predictive of both vaccination and the outcomes when interpreting $\Psi_F(P_F)$.

\section{Semiparametric targeted learning and inference for the conditional odds ratios}

The following sections give efficient and inefficient influence functions for the odds ratio coefficient parameters $\beta$ and $\beta_{adj}$. Using the targeted maximum likelihood estimation (TMLE) framework (van der Laan, Rose, 2011)\nocite{vanderLaanRose2011}, we develop $\sqrt{n}$-consistent and asymptotically normal substitution estimators for both the coefficient vectors and their respective conditional odds ratios. An R package implementing our methods can be found on GitHub at: \hyperlink{https://github.com/Larsvanderlaan/spCaseOnlyVE}{https://github.com/Larsvanderlaan/spCaseOnlyVE}.

The efficient influence function plays a key role in semiparametric inference by determining the best possible variance of asymptotically linear and regular estimators. This function encodes the sensitivity of the target estimand under perturbations of the data-generating distribution and can be used to construct asymptotically linear and locally efficient estimators. Popular semiparametric estimation strategies include one-step estimation (Bickel et al., 1993)\nocite{bickel1993efficient} and influence function-based estimating equations (Robins et al., 1994, van der Laan and Robins, 2003; Chernozhukov et al., 2018)\nocite{vanderlaanunified}\nocite{robinsCausal}\nocite{DoubleML}\nocite{cvTMLEDoubleML}, as well as targeted maximum likelihood estimation (van der Laan and Rubin, 2006\nocite{laan_rubin_2006}; van der Laan, Rose, 2011)\nocite{vanderLaanRose2011}. However, one-step and estimating equation approaches are not substitution estimators and, therefore, may not respect known constraints of the statistical model, which can impact finite sample performance (Kang and Schafer, 2007)\nocite{KangSchaferAIPW}. TMLE provides a general framework for constructing efficient substitution estimators that agree with the one-step and estimating-equation-based estimators obtained from the targeted data-generating distribution estimator while potentially improving finite-sample performance (Porter et al., 2011\nocite{PorterLaanPerform}). The targeted learning methodology \citep{vanderLaanRose2011,cvTMLEDoubleML} advocates for using flexible, data-adaptive, black-box machine-learning algorithms within the estimation procedure, avoiding bias due to misspecification and relaxing the conditions needed for asymptotic linearity and efficiency.

\subsection{Semiparametric estimation when pre-vaccination variables inform strain missingness}

 \label{section::spTMLE}

 In this section, we provide asymptotically linear and semiparametrically efficient TMLEs for the population odds ratio $OR(P)$ and the parameter vector $\beta(P)$, assuming the semiparametric assumption that $OR(P)(w,t) = \exp {\beta(P)^T \underline{f}(w,t)}$ for some $\beta(P) \in \mathbb{R}^s$. Our semiparametric assumption is equivalent to assuming that the relative conditional vaccine efficacy parameter $\Psi_F(P_F)$ satisfies \eqref{eqn::parametricForm} with $\beta_F(P_F) = \beta(P)$, under the identification result of Theorem \ref{theorem::ident}, which assumes that the case missingness $R$ and strain missingness $\Delta$ are randomized conditional on baseline covariates and treatment. In the next section, we propose a TMLE that can provide valid inference even when post-infection variables inform the strain missingness.

 To this end, let $\mathcal{M}$ be a semiparametric statistical model corresponding with the assumption that each $P' \in \mathcal{M}$ satisfies $ \mu_{P'}(a,w,t) =  \text{expit}\{ a\cdot \beta(P')^T\underline{f}(w,t) + h_{P'}(w,t)\}$ for some $\beta(P') \in \mathbb{R}^s$ and $h_{P'}: \mathbb{R}^{d} \times \mathbb{T} \rightarrow \mathbb{R}$. Under the assumption that $P \in \mathcal{M}$, we have $OR(P)(w,t) = \beta(P)^T\underline{f}(w,t)$. We first present the efficient influence function of the parameter $P \mapsto \beta(P)$ under the semiparametric model assumption that $P \in \mathcal{M}$. Denote, for each realisation $(a,w,t)$ of $(T,A,W)$ and $P' \in \mathcal{M}$, the conditional variance function $\sigma^2_{P'}(a,w,t) := \mu_{P'}(a,t,w) (1 -\mu_{P'}(a,w,t) )$. Further, define 
$$ H_{P'}(a,w, t) :=   \left(a  -\frac{ E_{P'}[ A \sigma^2_{ P'}(a=1,w,t)\mid  T = t, W = w, \Delta = 1]
}{ E_{P'}[  \sigma^2_{ P'}(A,w,t)  \mid  T = t, W = w, \Delta = 1]}\right),$$
which can be seen to implicitly depend on $\widetilde{\pi}$ after expanding the expectation over $A$. Next, define the scaling matrix $\Lambda_{P'}$ as 
$$\Lambda_{P'}^{-1} := E_{P'}\left\{ \underline{f}(W,T)\underline{f}^T(W,T) \frac{\Delta (1-\widetilde{\pi}_{P'}(W,T))\widetilde{\pi}_{P'}(W,T) \sigma^2_{ P'}(1,W,T)\sigma^2_{ P'}(0,W,T)}{ (1-\widetilde{\pi}_{P'}(W,T))\sigma^2_{ P'}(0,W,T) +\widetilde{\pi}_{P'}(W,T)\sigma^2_{ P'}(1,W,T) }\right\},$$
where we assume throughout that $\Lambda_{P'}^{-1}$ is invertible.

The efficient influence function of $\beta$ under $\mathcal{M}$ is given in the following theorem and its derivation follows with minor modifications from the proof of the semiparametric efficient influence function for the partially linear logistic regression model given in van der Laan (2009)\nocite{OddsRatioreadingsTMLE} -- see also Tchetgen Tchetgen et al. (2010).

\begin{theorem}
 
The vector-valued efficient influence function of $\beta$ at $P \in \mathcal{M}$ with respect to the semiparametric statistical model $\mathcal{M}$ is
$$ D_{P}(o) :=   (\Lambda_{P} \underline{f}(w,t)) \odot \delta H_P(a,w,t) \left[\delta j  - \mu_{P}(a,w,t)\right], $$
where the product ``$\odot $" is taken coordinate-wise.
\label{theorem::EIF::uninformative}
\end{theorem}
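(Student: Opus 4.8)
The plan is to obtain $D_P$ as the canonical gradient of $P'\mapsto\beta(P')$ by computing the efficient score via a projection onto the orthocomplement of the nuisance tangent space and then normalizing by the efficient information. Working throughout with the law of $O\mid R=1$, I would first factorize
$$
p(o\mid R=1)=p(w,t\mid R=1)\,p(a\mid w,t,R=1)\,p(\delta\mid a,w,t,R=1)\,p(j\mid a,w,t,\Delta=1,R=1),
$$
and note that under $\mathcal{M}$ the coefficient $\beta$ enters only the last (outcome) factor, while the marginal law of $(W,T)$, the treatment mechanism $\widetilde{\pi}_P$, the strain-missingness mechanism $\Pi_P$, and the baseline logit $h_P$ are variation-independent nuisances. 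Since $\mu_P$ and $\sigma^2_P$ are by definition the conditional mean and variance of $J$ given $\{A,W,T,\Delta=1,R=1\}$, every score arising from the first three factors is orthogonal to any outcome-residual score of the form $\delta\,c(a,w,t)[\delta j-\mu_P(a,w,t)]$: the prefactor $\delta$ restricts to $\{\Delta=1\}$, where $E_P[\delta j-\mu_P\mid A,W,T,\Delta=1,R=1]=0$. Consequently the projection defining the efficient score need only be carried out against the subspace generated by $h_P$.

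Next I would record the two relevant scores. Differentiating the logistic outcome log-likelihood in $\beta$ gives $S_\beta(o)=\delta\,a\,\underline{f}(w,t)[\delta j-\mu_P(a,w,t)]$, while the one-dimensional perturbation $h_P\mapsto h_P+\varepsilon b$ produces the nuisance score $\delta\,b(w,t)[\delta j-\mu_P(a,w,t)]$ for an arbitrary $b:\mathbb{R}^d\times\mathbb{T}\to\mathbb{R}$. The efficient score is $S_\beta$ minus its projection onto the closed span of these $h_P$-scores; writing that projection as $\delta\,g(w,t)\,\underline{f}(w,t)[\delta j-\mu_P]$ for a scalar $g$, the orthogonality condition reduces (after taking conditional expectations and using $E_P[\delta(\delta j-\mu_P)^2\mid A,W,T]=\Pi_P\,\sigma^2_P$) to $E_P\{(A-g(W,T))\,\sigma^2_P(A,W,T)\mid W,T,\Delta=1,R=1\}=0$. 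Solving this, together with the elementary identity $A\,\sigma^2_P(A,w,t)=A\,\sigma^2_P(1,w,t)$ for binary $A$, yields precisely the multiplier subtracted inside $H_P$, so that
$$
S_{\mathrm{eff}}(o)=\delta\,H_P(a,w,t)\,\underline{f}(w,t)\big[\delta j-\mu_P(a,w,t)\big].
$$

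Finally I would normalize. Because $S_{\mathrm{eff}}$ is the projection of $S_\beta$, the efficient information equals $E_P[S_{\mathrm{eff}}S_{\mathrm{eff}}^{T}]$; taking the conditional expectation over $J$ replaces $[\delta j-\mu_P]^2$ by $\sigma^2_P(A,W,T)$ and carries a factor $\Delta$, after which expanding over the binary $A$ given $(W,T)$ and simplifying the weighted sum $\widetilde{\pi}_P\,H_P(1)^2\sigma^2_P(1,\cdot)+(1-\widetilde{\pi}_P)\,H_P(0)^2\sigma^2_P(0,\cdot)$ collapses to the single fraction $\tfrac{(1-\widetilde{\pi}_P)\widetilde{\pi}_P\,\sigma^2_P(1,\cdot)\sigma^2_P(0,\cdot)}{(1-\widetilde{\pi}_P)\sigma^2_P(0,\cdot)+\widetilde{\pi}_P\sigma^2_P(1,\cdot)}$ appearing in $\Lambda_P^{-1}$. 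Thus $E_P[S_{\mathrm{eff}}S_{\mathrm{eff}}^{T}]=\Lambda_P^{-1}$, and $D_P=\Lambda_P\,S_{\mathrm{eff}}$ is the stated canonical gradient. I expect the main obstacle to be the bookkeeping around the missingness indicators: I must verify carefully that the $\Delta$-factor contributes no extra term to the projection and that the $\widetilde{\pi}_P$- and $\Pi_P$-tangent spaces are genuinely orthogonal to the outcome residual, so that the problem reduces to the known partially linear logistic regression computation of van der Laan (2009) and Tchetgen Tchetgen et al. (2010); the remaining algebra reducing the information to $\Lambda_P^{-1}$ is routine.
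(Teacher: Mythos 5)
Your derivation is correct, and it in fact supplies the argument that the paper only cites: the published proof of Theorem~\ref{theorem::EIF::uninformative} is a one-line reduction to the known efficient influence function for the partially linear logistic regression model, ``working conditional on $\{R=1,\Delta=1\}$,'' with the computation delegated to van der Laan (2009). Your route is the same underlying projection calculation, carried out explicitly on the observed-data likelihood: the factorization isolates $\beta$ in the outcome factor; the scores of the $(W,T)$-marginal, the treatment mechanism, and the $\Delta$-mechanism are orthogonal to any residual of the form $\delta\,c(a,w,t)[\delta j-\mu_P]$, so the efficient score is $S_\beta$ minus its projection onto the $h$-scores $\{\delta\,b(w,t)[\delta j-\mu_P]\}$; solving $E_P\{(A-g(W,T))\sigma^2_P(A,W,T)\mid W,T,\Delta=1,R=1\}=0$ gives exactly $H_P$; and the information calculation with $u=\widetilde{\pi}_P\sigma^2_P(1,\cdot)$, $v=(1-\widetilde{\pi}_P)\sigma^2_P(0,\cdot)$ collapses to $uv/(u+v)$, i.e.\ $E_P[S_{\mathrm{eff}}S_{\mathrm{eff}}^{T}]=\Lambda_P^{-1}$, so $D_P=\Lambda_P S_{\mathrm{eff}}$. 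What your version buys over the paper's citation is precisely the bookkeeping you flag as the main obstacle: you verify that the missingness indicator contributes no extra term to the projection and that the conditioning on $\{\Delta=1\}$ is what makes $\widetilde{\pi}_P$ (rather than $\pi_P$) appear in $H_P$ and $\Lambda_P^{-1}$ --- a point the paper's ``condition on $\{R=1,\Delta=1\}$'' shorthand glosses over. No gaps.
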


Next, let $P_{n,0} \in \mathcal{M}$ be an estimate of $P$, which can be obtained using the methods described in Appendix \ref{section::pluginestimator} of the Supplementary Information. Using techniques from efficiency theory \citep{bickel1993efficient}, we can show that the bias of the plug-in estimator $\beta(P_{n,0} )$ is equal to $-P D_{P_{n,0}}$, up to typically second-order terms. This suggests using the influence function $D_{ P_{n,0}}$ to construct a debiased estimator of $\beta(P)$. One such estimator for $\beta(P)$ is the one-step estimator (Bickel et al., 1993) which is given by $\beta(P_{n,0})  + P_n D_{P_{n,0}}$, where the second term is a bias correction obtained by empirically estimating $P D_{P_{n,0}}$. 


 An alternative estimator, TMLE, constructs an updated estimator $P_{n}^* \in \mathcal{M}$ from $P_{n,0}$ corresponding with updated nuisance such that the efficient score equation is solved: 
  \begin{equation}
  \label{eqn::tmle1score}
      \frac{1}{n}\sum_{i=1}^n D_{P_{n}^*}(O_i) = 0.
    \end{equation}
  Since the efficient score equation corresponds with the debiasing term for the one-step estimator of $\beta(P)$ obtained using $P_{n}^*$, the resulting TMLE $\beta(P_{n}^*)$ is both a one-step estimator and a substitution estimator. Leveraging this fact, we show later, under mild regularity conditions, that the TMLEs $OR(P_{n}^*)(w,t) = \exp\{\underline{f}(w,t)^T \beta(P_{n}^*)\}$ and $\beta(P_{n}^*)$ for $OR(P)$ and $\beta(P)$ are asymptotically linear and efficient estimators. 

The TMLE algorithm involves performing iterative (working) maximum likelihood estimation along a (data-dependent) parametric submodel that fluctuates the initial estimator $P_{n,0} \in \mathcal{M}$ in a direction determined by the efficient influence function of $\beta$, a so-called least favorable model \citep{vanderLaanRose2011}. In this case, the submodel only fluctuates the conditional train probability $\mu_{P_{n,0}}$ and leaves all other components of $P_{n,0}$ unchanged. 

For a given offset $P' \in \mathcal{M}$, define the (least-favorable) fluctuation submodel through $P'$ and working log-likelihood function:
\begin{itemize}
    \item \textit{Logistic fluctuation submodel:} $(P'(\varepsilon): \varepsilon \in \mathbb{R}^s) \subset \mathcal{M}$ is a submodel that only fluctuates $\mu(P')$ and satisfies
    $$\varepsilon \mapsto \text{logit}\left\{\mu_{P'(\varepsilon)}(a,w,t) \right\} = \text{logit}\left\{\mu_{P'}(a,w,t) \right\} + \varepsilon^T \underline{f}(w,t) H_{P'}(a,w,t): \varepsilon \in \mathbb{R}^s$$
    \item \textit{Working log-likelihood function:} 
    $$L_n(P') :=  \frac{1}{n}\sum_{i=1}^n   \Delta_i \left\{J_i \log \mu_{P'}(Z_i) + (1-J_i) \log(1 - \mu_{P'}(Z_i)) \right\} \text{ where } Z_i := (  A_i, W_i, T_i).$$
\end{itemize}

By construction, the submodel $ \{P'(\varepsilon) : \varepsilon \in \mathbb{R}^{s} \}$ respects the partially linear constraints imposed by $\mathcal{M}$. In particular, we have $ \mu_{P'(\varepsilon)}(a,w,t) = \text{expit}\left\{ h_{P'(\varepsilon)}(w,t) + a \beta(P'(\varepsilon))^T \underline{f}(w,t)\right\}$ where $\beta(P'(\varepsilon)) = \beta(P') + \varepsilon $ and $ h_{P'(\varepsilon)} = h_{P'}(w,t) + \varepsilon^T \underline{f}(w,t) H_{P'}(a=0,w,t)$. It can also be verified that the score vector of the working log-likelihood along this submodel satisfies
 \begin{align*}
       \frac{d}{d\varepsilon} L_{n}(P'({\varepsilon}))   & = \frac{1}{n}\sum_{i=1}^n \Delta_i H_{P'}( A_i, W_i, T_i) \left[J_i -  \mu_{P'({\varepsilon})}(A_i,W_i,T_i)\right]. 
\end{align*}
To construct a targeted MLE $P_n^*$ satisfying Equation \eqref{eqn::tmle1score}, we note that $\frac{d}{d\varepsilon} L_n(P'(\varepsilon)) \big|_{\varepsilon=0}$ is proportional to $\frac{1}{n} \sum_{i=1}^n D_{P'}(O_i)$ up to matrix scaling by $\Lambda_{P'}$. Starting with an initial estimator $P_{n,0}$, we obtain the MLE $\varepsilon_n^* \in \argmax_{\varepsilon \in \mathbb{R}^s} L_n(P_{n,0}(\varepsilon))$ by fluctuating the initial estimator. If $\varepsilon_n^{*}$ happened to be the zero vector, then the first-order equations characterising the MLEs $\varepsilon_n^*$ and $P_n^* := P_{n,0}(\varepsilon_n^*)$ would imply that Equation \eqref{eqn::tmle1score} is satisfied, as we desire. Although $\varepsilon_n^*$ is typically not zero after one update step, the first-order equations characterising the MLE still imply that 
$$\frac{1}{n}\sum_{i=1}^n \Delta_i H_{P_{n,0}}(A_i, W_i, T_i) [J_i -  \mu_{P_{n}^*}(A_i,W_i, T_i)] =0,$$
We can construct a targeted MLE $P_{n}^*$ that satisfies Equation \eqref{eqn::tmle1score} using an iterative TMLE that is defined as follows. 


The algorithm updates estimators iteratively using maximum likelihood estimation. To begin, we set $k=0$ and initialise $P_{n,0}^{(k=0)} = P_{n,0}$. For each $k \in \mathbb{N}$, we define updated estimators $P_{n,0}^{(k)} := P_{n,0}^{(k-1)}(\varepsilon_n^{(k)})$ where $\hat \varepsilon_n^{(k)} = \argmax_{\varepsilon \in \mathbb{R}^s} L_{n}(P_{n,0}^{(k-1)}(\varepsilon)) $ is the MLE obtained after $k$ iterations. This recursion implies that $\beta(P_{n,0}^{(k)}) = \beta(P_{n,0}^{(k-1)}) + \varepsilon_n^{(k)}$. Standard software can be used to perform iterative multivariate logistic regression to compute these MLEs. We iterate this maximum likelihood update until the updated MLE $\hat \varepsilon_n^{(K)}$ at iteration $K$ is sufficiently close to the zero vector, obtaining an estimator $P_{n}^{*}:= P_{n,0}^{(K)}$ that approximately solves the efficient score equation. The TMLEs for $\beta(P)$ and $OR(P)(w,t)$ are then given by $\beta(P_{n}^{*})$ and $\exp {\underline{f}^T(w,t) \beta(P_{n}^{*}) }$. The procedure should be iterated until the efficient score is solved at a level $\frac{1}{n} \sum_{i=1}^n D_{P_{n}^{*}}(O_i) =  o_P(1 /\sqrt{n})$ for our theoretical results to apply.


\subsection{Semiparametric estimation when pre-and-post-infection variables inform strain missingness} \label{section::spTMLEMissing}

We present asymptotically linear TMLEs for $OR_{adj}(P)$ and $\beta_{adj}(P)$ under the semiparametric assumption that $OR_{adj}(P)(w,t)$ equals $\exp \left\{\beta_{adj}(P)^T\underline{f}(w,t)\right\}$ for some $\beta_{adj}(P) \in \mathbb{R}^s$. The conditional odds ratio $OR_{adj}(P)$ of the observed data distribution identifies the population odds ratio $OR_F(P_F)$ under the missing-data assumptions of Theorem \ref{theorem::ident2}. As such, under the identification, our semiparametric assumption is equivalent to assuming $\Psi_F(P_F)$ satisfies \eqref{eqn::parametricForm} with $\beta_F(P_F) = \beta_{adj}(P)$. These TMLEs are valid even when post-infection covariates $W_T$ inform the viral strain $J$ and strain missingness $\Delta$. Compared to the TMLEs in the previous section, the proposed TMLEs offer valid inference in a wider range of scenarios.

 To this end, let $\mathcal{M}_{adj}$ be a semiparametric statistical model corresponding with the assumption that each $P' \in \mathcal{M}$ satisfies $\mu_{adj, P'} =  \text{expit}\{ a\cdot \beta_{adj}(P')^T\underline{f}(w,t) + h_{adj,P'}(w,t)\}$ for some $\beta_{adj}(P') \in \mathbb{R}^s$ and $h_{adj, P'}: \mathbb{R}^{d} \times \mathbb{T} \rightarrow \mathbb{R}$. Assuming $P$ belongs to the statistical model $\mathcal{M}$, we can express the adjusted odds ratio $OR_{adj}(P)(w,t)$ as $\beta_{adj}(P)^T\underline{f}(w,t)$. However, deriving the efficient influence function for estimating $\beta_{adj}$ under the semiparametric statistical model $\mathcal{M}_{adj}$ is challenging since the parameter involves marginalisation over $W_T$. Instead, we use techniques for censored-data models developed in van der Laan, Robins (2003) to provide a closed-form influence function for $\beta_{adj}$ that is potentially inefficient. Although this influence function may be inefficient, it reduces to the efficient influence function of $\beta_{adj}$ under $\mathcal{M}_{adj}$ if there is no strain missingness (i.e., $P(\Delta=1)=1$), indicating that it is a reasonable choice in terms of efficiency.

For each $P' \in \mathcal{M}_{adj}$, let $\sigma^2_{adj, P'}(a,t,w) := \mu_{adj, P'}(a,t,w)(1-\mu_{adj, P'}(a,t,w))$ and define the function
$$H_{adj, P'}(a,w,t) :=  \left(a  - \frac{ E_{P'}[ A \sigma^2_{adj, P'}(a=1, t ,w)  \mid  T = t, W = w]
}{ E_P[ \sigma^2_{adj,P'}(A,t,w)\mid  T = t, W = w]}\right).$$
Next, define the weight matrix $\Lambda_{adj,P}$ as
$$\Lambda_{adj,P'}^{-1}:= E_{P'}\left\{\underline{f}(W,T)\underline{f}^T(W,T)\frac{(1-\pi_{P'}(W,T))\pi_{P'}(W,T) \sigma^2_{adj, P'}(1,W,T)\sigma^2_{adj, P'}(0,W,T)}{  (1-\pi_{P'}(W,T))\sigma^2_{adj, P'}(0,W,T) +\pi_{P'}(W,T)\sigma^2_{adj, P'}(1,W,T) }\right\},$$
where we assume the inverse exists.


\begin{theorem}
 
A vector-valued influence function for the parameter $\beta_{adj}$ at $P$ under $\mathcal{M}_{adj}$ is
\begin{align*}
    D_{adj,P}(o) :=   &  \frac{\delta}{\Pi_P(w_T,t, a,w)} \odot \left\{\Lambda_{adj,P} \underline{f}(w,t) \right\} \odot   H_{adj, P}(a,w,t)  \left[ j  -  \overline{\mu}_P(w_T,t,a,w) \right] \\
& +   \left\{ \Lambda_{adj,P} \underline{f}(w,t) \right\} \odot H_{adj,P}(a,w,t)  \left[\overline{\mu}_P(w_T,t,a,w)  -  {\mu}_{adj,P}(a,w,t) \right],
\end{align*}
where the product ``$\odot $" is taken coordinate-wise.
\label{theorem::EIF::informative}
\end{theorem}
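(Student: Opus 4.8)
The plan is to treat this as a coarsening-at-random (CAR) censored-data problem and invoke the general full-data-to-observed-data gradient mapping of van der Laan and Robins (2003). Conditional on $R=1$, I regard $X := (W,A,T,W_T,J)$ as the full data and $\Delta$ as the coarsening variable, so that $O$ records $X$ with $J$ deleted whenever $\Delta=0$. Assumption \ref{cond::ident::D3} is exactly the statement that this coarsening is CAR, $\Delta \independent J \mid W_T, A, W, T, R=1$, while Assumption \ref{cond::ident::D1} supplies the positivity $\Pi_P>0$ that makes the inverse weighting well defined. Under CAR one has the identity $E_P[J \mid W_T,A,W,T,R=1] = \overline{\mu}_P(W_T,T,A,W)$, which I will use repeatedly.

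The first step is to exhibit a full-data gradient for $\beta_{adj}$. In the full-data world $J$ is always observed, so estimating $\beta_{adj}$ reduces to exactly the partially linear logistic regression problem of Theorem \ref{theorem::EIF::uninformative} applied to the outcome $J$ with covariates $(A,W,T)$ and marginal regression $\mu_{adj,P}$, with $\widetilde{\pi}$ replaced by $\pi$ because there is no $\Delta$-conditioning. Porting the derivation of van der Laan (2009) — projecting the $\beta_{adj}$-score onto the orthocomplement of the nuisance tangent space generated by $h_{adj}$, which produces the orthogonalizing multiplier $H_{adj,P}$, and rescaling by the inverse information $\Lambda_{adj,P}$ — yields the candidate full-data gradient
$$ D^F_{adj,P}(X) = \left\{\Lambda_{adj,P}\underline{f}(w,t)\right\} H_{adj,P}(a,w,t)\,\left[\,j - \mu_{adj,P}(a,w,t)\,\right]. $$
I would verify its gradient property by confirming $E_P[D^F_{adj,P}]=0$, that the $H_{adj,P}$ construction makes it orthogonal to every nuisance score depending on $(W,T)$ alone (weighted by $\sigma^2_{adj,P}$), and that $\Lambda_{adj,P}$ normalizes the derivative matrix to the identity.

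The second step applies the CAR mapping: under Assumptions \ref{cond::ident::D1} and \ref{cond::ident::D3}, for any full-data gradient $D^F$ the function $\frac{\Delta}{\Pi_P}D^F(X) - \left(\frac{\Delta}{\Pi_P}-1\right)E_P[D^F(X)\mid W,A,T,W_T,R=1]$ is an observed-data gradient for the same parameter. Evaluating the conditional expectation with the CAR identity gives
$$ E_P\!\left[D^F_{adj,P}\mid W,A,T,W_T,R=1\right] = \left\{\Lambda_{adj,P}\underline{f}(w,t)\right\}H_{adj,P}(a,w,t)\left[\,\overline{\mu}_P(w_T,t,a,w)-\mu_{adj,P}(a,w,t)\,\right], $$
and substituting and collecting terms (the $\frac{\Delta}{\Pi_P}$ contributions telescope so the weighted residual becomes $j-\overline{\mu}_P$) reproduces exactly the stated $D_{adj,P}$. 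The sanity check that $\Pi_P\equiv 1$ collapses $D_{adj,P}$ back to $D^F_{adj,P}$, the full-data efficient influence function, is then immediate.

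I expect the main obstacle to be the rigorous justification that the augmented form is a genuine gradient rather than merely a mean-zero estimating function — that is, that the second, unweighted term is precisely the projection onto the CAR censoring tangent space rendering $D_{adj,P}$ first-order insensitive to perturbations of the nuisance $\Pi_P$ (the familiar augmented-IPW double robustness in $(\Pi_P,\overline{\mu}_P)$). Concretely, the delicate bookkeeping is to check that $E_P[\,D_{adj,P}\cdot s\,]$ agrees with $\frac{d}{d\epsilon}\beta_{adj}(P_\epsilon)\big|_{\epsilon=0}$ across scores $s$ for the strain-missingness mechanism, for $\overline{\mu}_P$ and the law of $W_T$, and for the treatment and covariate laws simultaneously, keeping the nested conditioning events $\{R=1\}$ and $\{\Delta=1\}$ aligned throughout; this is where care is needed, whereas the algebraic reduction to the displayed formula is routine.
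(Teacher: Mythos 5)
Your proposal is correct and follows essentially the same route as the paper: both start from the same full-data partially linear logistic gradient $\{\Lambda_{adj,P}\underline{f}(w,t)\}H_{adj,P}(a,w,t)[j-\mu_{adj,P}(a,w,t)]$, inverse-probability-weight it by $\delta/\Pi_P$, and correct it with the term $\{\Lambda_{adj,P}\underline{f}(w,t)\}H_{adj,P}(a,w,t)[\overline{\mu}_P(w_T,t,a,w)-\mu_{adj,P}(a,w,t)]$. The only cosmetic difference is that you obtain the correction via the standard augmented-IPCW mapping of van der Laan and Robins (2003), whereas the paper arrives at the identical term by projecting the IPCW gradient onto the observed-data tangent spaces $T_{\Delta J}\mathcal{M}_{np}$ and $T_{W_T}\mathcal{M}_{np}$ --- two equivalent presentations of the same censored-data construction.
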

Using the same approach as in the previous section, the above influence function can be used to construct a one-step estimator for $\beta_{adj}(P)$. We now give a TMLE for $\beta_{adj}(P)$ and $OR_{adj}(P)$, which, although more involved, is similar to the TMLE of the previous section. Let $P' \in \mathcal{M}_{adj}$ be an arbitrary offset. We define the following (least-favorable) fluctuation submodel and working log-likelihood function:

\begin{itemize}
    \item \textit{Logistic fluctuation submodel:} $(P'(\varepsilon): \varepsilon = (\varepsilon_1, \varepsilon_2) \in \mathbb{R}^s \times \mathbb{R}^s) \subset \mathcal{M}$ is a submodel that only fluctuates $\mu_{adj.P'}$ and $\overline{\mu}_{P'}$ through the paths $(\mu_{adj,P'}(\varepsilon_1): \varepsilon_1 \in \mathbb{R}^s)$ and $(\overline{\mu}_{P'}(\varepsilon_1): \varepsilon_1 \in \mathbb{R}^s)$ where 
    \begin{align*}
       \text{logit}\left\{\mu_{adj,P'}(\varepsilon_1)(a,w,t) \right\} &:= \text{logit}\left\{\mu_{adj, P'}(a,w,t) \right\} + \varepsilon_1^T \underline{f}(w,t) H_{P'}(a,w,t): \varepsilon_1 \in \mathbb{R}^s; \\
        \text{logit}\left\{\overline{\mu}_{P'}(\varepsilon_2)(w_T, t, a,w) \right\} &:= \text{logit}\left\{\overline{\mu}_{P'}(w_T, t, a,w) \right\} + \varepsilon_2^T \underline{f}(w,t) H_{P'}(a,w,t): \varepsilon_2 \in \mathbb{R}^s.
    \end{align*}  
    \item \textit{Working log-likelihood function:} $\varepsilon \mapsto L_n(P'(\varepsilon)) :=L_{n,1}(\varepsilon_1;\, P')) +L_{n,2}(\varepsilon_2;\, P') $ where
    \begin{align*} 
  L_{n,1}(\varepsilon_1;\, P') &:=  \frac{1}{n}\sum_{i=1}^n \left\{ \overline{\mu}_{P'}(X_i) \log \mu_{adj,P'}(\varepsilon_1)(Z_i) + (1- \overline{\mu}_{P'}(X_i)) \log(1 - \mu_{adj,P'}(\varepsilon_1)(Z_i)) \right\}; \\
  L_{n,2}(\varepsilon_1;\, P') &:= \frac{1}{n}\sum_{i=1}^n  \frac{\Delta_i}{\Pi_{P'}(X_i)}\left\{J_i \log \overline{\mu}_{P'}(\varepsilon_2)(X_i) + (1-J_i) \log(1 - \overline{\mu}_{P'}(\varepsilon_2)(X_i)) \right\},
    \end{align*}
    $\text{ and } X_i := (W_{T,i}, T_i, A_i, W_i),\, Z_i := (  A_i, W_i, T_i)$.
\end{itemize}

It can be verified that $\frac{d}{d\varepsilon} L_n(P'(\varepsilon)) \mid_{\varepsilon = 0}$ is proportional to the score $P_n D_{adj, P'}$ up to matrix scaling by $\Lambda_{adj, P'}$. The working log-likelihood function, $\varepsilon \mapsto L_n(P'(\varepsilon))$, is a sum of two log-likelihood factors. $L_{n,2}(\varepsilon_2; ,P')$ is a weighted log-likelihood function based on the inverse probability of missingness and only depends on the fluctuation model for $\overline{\mu}_{P'}$. Meanwhile, $L_{n,1}(\varepsilon_1; ,P')$ is a log-likelihood function that only depends on the fluctuation model for $\mu_{adj,P'}$ and involves a pseudo-outcome based on $\overline{\mu}_{P'}$. Since $\varepsilon_1$ and $\varepsilon_2$ are variation independent, the MLE $\varepsilon_n^* = \argmax_{\varepsilon \in \mathbb{R}^s \times \mathbb{R}^s }  L_n(P'(\varepsilon))$ equals $(\varepsilon_{n,1}^*, \varepsilon_{n,2}^*)$ where $\varepsilon_{n,j}^* = \argmax_{\varepsilon_j \in \mathbb{R}^s  }  L_{n,j}(\varepsilon_j ; P')$ for $j \in \{1,2\}$. Thus, the offset MLE can be computed using standard software for weighted logistic regression.

  Let $P_{n,0} \in \mathcal{M}_{adj}$ be an initial estimator of $P_0$, which can be obtained using the methods described in Appendix \ref{section::pluginestimator} of the Supplementary Information. As in the previous section, we can construct a targeted estimator $P_{n}^*$ such that $P_n D_{adj, P_{n}^*} = 0$ by iterating weighted multivariate logistic regression until convergence in an appropriate sense. To this end, let $k \in \{0\} \cup \mathbb{N}$ index the maximum likelihood iteration. Initialise $P_{n,0}^{(k = 0)} := P_{n,0}$ and, for $k \in \mathbb{N}$, recursively define $P_{n,0}^{(k)} := P_{n,0}^{(k-1)}(\varepsilon_n^{(k)})$ where $\varepsilon_n^{(k)} = \argmax_{\varepsilon \in \mathbb{R}^s \times \mathbb{R}^s} L_n(P_{n,0}^{(k-1)}(\varepsilon_n^{(k)}))$. Continue this recursion $K \in \mathbb{N}$ times where $K$ is such that $P_n D_{adj, P_{n,0}^{(K)}} = o_P(n^{-1/2})$. Letting $P_{n}^* =: P_{n,0}^{(K)}$, the TMLEs of $\beta_{adj}(P)$ and $OR_{adj}(P)(w,t)$ are then given by $\beta_{adj}(P_{n}^*)$ and $\exp \left\{ \beta_{adj}(P_{n}^*)^T \underline{f}(w,t)\right\}$.

\subsection{Inference for the conditional odds ratios}
The following Theorems \ref{theorem::limitspTMLE} and \ref{theorem::limitspTMLEMissing} characterise the asymptotic distribution of the TMLE coefficient estimators introduced in Sections \ref{section::spTMLE} and  \ref{section::spTMLEMissing} By applying the delta-method, we obtain the limiting distributions of the TMLEs for $OR(P)$ and $OR_{adj}(P)$ as well.

Let $P_{n}^* \in \mathcal{M}$ be the targeted distribution for $\beta(P)$ introduced in Section \ref{section::spTMLE} obtained from an initial estimator $P_{n,0} \in \mathcal{M}$. Denote $\widetilde \pi_n := \widetilde \pi_{P_{n,0}}$ and ${\mu}_{n}^* := {\mu}_{P_{n}^*}$. The following theorem establishes conditions under which $\beta(P_{n}^*)$ is an asymptotically linear and efficient estimator of $\beta(P)$.

\begin{condition}
   $\underline{f}(W,T)$ is $P_F$-uniformly bounded, and $E_{P}[\underline{f}(W,T)\underline{f}(W,T)^T]$ is invertible.
    \label{cond::limitFeature}
\end{condition}

\begin{condition}
 \textit{Boundedness:} There exists some $\delta > 0$ such that $P(1-\delta > \mu_{n}^*(A,W,T) > \delta) = P(1- \delta > \mu_{P}(A,W,T) > \delta) = 1$.
 \label{cond::limitspTMLEaBounded}
\end{condition}
\begin{condition}
 \textit{Estimators fall in Donsker class:} $\widetilde{\pi}_{n}$ and $\mu_{n}^*$ fall with probability one in a uniformly bounded $P$-Donsker function class. \label{cond::limitspTMLEb}
\end{condition}
\begin{condition}
\textit{Sufficient nuisance rates:} $\|\widetilde{\pi}_{n} - \widetilde{\pi}_P\| = o_P(n^{-1/4})$ and $\max_{a \in \{0,1\}}\|\mu_{n}^*(a, \cdot) - \mu_{P}(a,\cdot)\| = o_P(n^{-1/4})$. \label{cond::limitspTMLEc}
\end{condition}

\begin{theorem}
Assume that $P \in \mathcal{M}$, the conditions of Theorem \ref{theorem::ident} hold, and that conditions  \ref{cond::limitFeature}-\ref{cond::limitspTMLEc} hold. Then,
$$\sqrt{n} \left(\beta(P_{n}^*) - \beta(P) \right) \longrightarrow_d N\left(0, \text{cov}(D_P) \right),$$
where the convergence is jointly in distribution as a random vector and $\text{cov}(D_{P}(\cdot)) \in \mathbb{R}^s \times \mathbb{R}^s$ is the covariance matrix $\text{cov}(D_P) := E_P\left[D_{P}(O)D_{P}^{T}(O) \right]$.
\label{theorem::limitspTMLE}
\end{theorem}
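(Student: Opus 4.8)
The plan is to follow the standard template for establishing asymptotic linearity and efficiency of a TMLE, exploiting that $\beta(P_n^*)$ is simultaneously a substitution estimator and a solver of the efficient score equation. First I would record the defining property of the iterative algorithm: by construction it is iterated until the empirical efficient score is solved at the level $P_n D_{P_n^*} = o_P(1/(\sqrt{n}\log n))$, so in particular $P_n D_{P_n^*} = o_P(n^{-1/2})$, where $D_P$ is the efficient influence function of Theorem \ref{theorem::EIF::uninformative}. This is the only place the specific TMLE construction enters; the remainder of the argument is generic given an asymptotically-efficient-score-solving substitution estimator.

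Second, the crux is the exact first-order expansion of the parameter. Since $\beta$ is pathwise differentiable at $P$ with canonical gradient $D_P$ under $\mathcal{M}$, I would establish the identity
\[
\beta(P_n^*) - \beta(P) = -P D_{P_n^*} + R_2(P_n^*, P),
\]
where $R_2$ is a second-order remainder collecting products of the estimation errors of the nuisances $\widetilde{\pi}$ and $\mu$. This is consistent with the paper's own observation that the plug-in bias equals $-P D_{P_{n,0}}$ up to second-order terms. Verifying the precise form of $R_2$, and in particular that it contains no term that is first-order in either nuisance error alone, is the heart of the proof. Here I would use the Neyman-orthogonality structure of $D_P$: the Gateaux derivatives of $P' \mapsto P D_{P'}$ with respect to perturbations of $\widetilde{\pi}$ and of $\mu$ each vanish at the truth, leaving only the mixed cross term. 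By Cauchy--Schwarz together with the boundedness of Condition \ref{cond::limitspTMLEaBounded} and the feature regularity of Condition \ref{cond::limitFeature}, this cross term is bounded up to a constant by $\|\widetilde{\pi}_n - \widetilde{\pi}_P\|\cdot\max_{a}\|\mu_n^*(a,\cdot) - \mu_P(a,\cdot)\|$, which is $o_P(n^{-1/4})\cdot o_P(n^{-1/4}) = o_P(n^{-1/2})$ by the rate Condition \ref{cond::limitspTMLEc}.

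Third, I would handle the empirical process term. Decomposing $-P D_{P_n^*} = (P_n - P)D_{P_n^*} - P_n D_{P_n^*}$ and inserting the score equation from the first step gives $-P D_{P_n^*} = (P_n - P)D_{P_n^*} + o_P(n^{-1/2})$. By the Donsker Condition \ref{cond::limitspTMLEb}, the boundedness Condition \ref{cond::limitspTMLEaBounded}, and the $L^2(P)$-consistency of the nuisance estimators implied by Condition \ref{cond::limitspTMLEc}, the map $P_n^* \mapsto D_{P_n^*}$ stays in a fixed $P$-Donsker class and satisfies $\|D_{P_n^*} - D_P\| = o_P(1)$, so asymptotic equicontinuity of the empirical process yields $(P_n - P)(D_{P_n^*} - D_P) = o_P(n^{-1/2})$. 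Combined with $P D_P = 0$, this gives $-P D_{P_n^*} = (P_n - P)D_P + o_P(n^{-1/2}) = P_n D_P + o_P(n^{-1/2})$.

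Finally, combining the two expansions yields the asymptotic linearity
\[
\beta(P_n^*) - \beta(P) = P_n D_P + o_P(n^{-1/2}),
\]
and the multivariate central limit theorem applied to the i.i.d.\ mean-zero average $P_n D_P$, whose covariance is finite by Conditions \ref{cond::limitFeature} and \ref{cond::limitspTMLEaBounded}, delivers $\sqrt{n}(\beta(P_n^*) - \beta(P)) \longrightarrow_d N(0, \text{cov}(D_P))$; efficiency is immediate because $D_P$ is the efficient influence function under $\mathcal{M}$. The hard part will be the second step: correctly deriving $R_2$ and confirming its product structure, since $\beta$ is defined only implicitly through the partially linear logistic model and the scaling matrix $\Lambda_P$ and weight $H_P$ themselves depend on the nuisances, so one must check that perturbing these objects introduces no uncontrolled first-order contribution to the remainder.
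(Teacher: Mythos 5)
Your proposal follows essentially the same route as the paper: solve the efficient score equation to $o_P(n^{-1/2})$, isolate the exact second-order remainder $\beta(P_n^*)-\beta(P)+PD_{P_n^*}$ (the paper's Lemma \ref{lemma::secondOrder1}), kill the drift $(P_n-P)(D_{P_n^*}-D_P)$ by Donsker equicontinuity, and finish with the CLT. The one substantive inaccuracy is your characterization of $R_2$: you claim the Gateaux derivatives in \emph{both} $\widetilde{\pi}$ and $\mu$ vanish at the truth so that only the mixed cross term $\|\widetilde{\pi}_n-\widetilde{\pi}_P\|\cdot\max_a\|\mu_n^*(a,\cdot)-\mu_P(a,\cdot)\|$ survives, i.e.\ a doubly robust remainder. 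The paper's Lemma \ref{lemma::secondOrder1} shows otherwise: the orthogonality of $H_P$ only annihilates the first-order contribution of $h_P-h_n$ when $\widetilde{\pi}$ is evaluated at the truth, and the second-order Taylor expansion of the expit link together with the implicit definition of $\beta$ (whose leading linear term must be cancelled against a normalization identity for $\Lambda_P$) leaves quadratic terms $\|h_P-h_n\|^2$, $\|\beta_n-\beta(P)\|^2$, and $\|\widetilde{\pi}_n-\widetilde{\pi}_P\|^2$ in addition to the cross term. This does not affect your conclusion, since Condition \ref{cond::limitspTMLEc} imposes $o_P(n^{-1/4})$ rates on both nuisances so every quadratic and cross term is $o_P(n^{-1/2})$, but it does mean the estimator is not doubly robust in $(\widetilde{\pi},\mu)$ as your orthogonality argument would suggest; you correctly flagged that pinning down $R_2$ is the hard step, and that step requires the explicit Taylor expansion and the $\|h_P-h_n\|\lesssim\|\mu_n^*(0,\cdot)-\mu_P(0,\cdot)\|$ and eigenvalue arguments the paper supplies rather than a bare appeal to Neyman orthogonality.
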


 The following corollary follows from an application of the multivariate delta-method and allows for the construction of confidence intervals for the relative conditional vaccine efficacy function. Under the conditions of the previous theorems, the limiting covariance matrix can be consistently estimated by the empirical covariance of the efficient influence function at the estimated data-generating distribution.  
 
 \begin{corollary}
Under the conditions of Theorem \ref{theorem::limitspTMLE}, we have the following pointwise limit distribution,
 $$\sqrt{n}\left\{ \log OR(P_n^*)(w,t) - \log OR(P)(w,t)\right\} \longrightarrow_d N\left(0 ,  \underline{f}^T(w,t)\text{cov}(D_{P})\underline{f}(w,t) \right).$$
 \end{corollary}

Since $\beta(P_{n}^*)$ is asymptotically linear with influence function being the efficient influence function, it follows from Bickel et al. (1993) that $\beta(P_{n}^*)$ is a locally efficient estimator. The first part of Condition \ref{cond::limitFeature} imposes uniform bounds on the feature mapping $\underline{f}$, and can potentially be relaxed with a more careful analysis. The second part of Condition  \ref{cond::limitFeature} ensures that the coefficient vector $\beta(P)$ that satisfies \eqref{eqn::parametricForm} can be uniquely identified from $P$. Condition \ref{cond::limitspTMLEaBounded} bounds the conditional odds ratios, and its violation can lead to estimator instability. Condition \ref{cond::limitspTMLEb} requires well-behaved nuisance estimators and can be relaxed to allow for black-box algorithms using sample-splitting or cross-fitting (Schick, 1986; van der Laan, Rose, 2011; Chernozhukov et al., 2018)\nocite{Schick}. Condition \ref{cond::limitspTMLEc} requires the targeted nuisance estimators to converge to true target functions faster than $n^{-1/4}$. Under mild regularity conditions, this condition holds when the untargeted nuisance estimators converge at a rate faster than $n^{-1/4}$ --- see Theorem of \cite{vanderLaanRose2011}. Smoothing splines \citep{friedman1991multivariate, tibshirani1996regression}, reproducing kernel Hilbert space estimators, the highly adaptive lasso \citep{HAL2016}, and some neural network architectures \citep{Farrell2018DeepNN} can satisfy both Conditions \ref{cond::limitspTMLEb} and \ref{cond::limitspTMLEc}.

Next, let $P_{n}^* \in \mathcal{M}$ denote the targeted distribution for $\beta_{adj}(P)$ introduced in Section \ref{section::spTMLEMissing} obtained from an initial estimator $P_{n,0} \in \mathcal{M}_{adj}$ of $P_0$. Denote $\pi_n := \pi_{P_{n,0}}$, $\Pi_n:= \Pi_{P_{n,0}}$, $\overline{\mu}_n^* := \overline{\mu}_{P_{n}^*}$, and ${\mu}_{n, adj}^* := {\mu}_{adj, P_{n}^*}$. The following theorem establishes conditions under which $\beta_{adj}(P_{n}^*)$ is an asymptotically linear estimator of $\beta_{adj}(P)$.

 \begin{condition}
 \textit{Boundedness:} There exists some $\delta > 0$ such that $\mu_{n}^*$, $ \mu_{P}$, $\overline{\mu}_{n}^*$, and $\overline{\mu}_P$ take values in $[\delta, 1-\delta]$ with probability one.
 \label{cond::limitspTMLEMissingposBounded }
\end{condition}
\begin{condition}
\textit{Strong overlap for strain missingness:} There exists some $\delta > 0$ such that $P(1 - \delta> \Pi_{n}(W_T, T, A, W) >   \delta) = P(1 - \delta> \Pi_P(W_T, T, A, W) >   \delta) = 1$.
 \label{cond::limitspTMLEMissingpos} 
\end{condition}
\begin{condition}
 \textit{Estimators fall in Donsker class:} $\pi_{n}$, $\Pi_{n}$,  $\overline{\mu}_{n}^*$ and $\mu_{n,adj}^*$ fall with probability one in a uniformly bounded $P$-Donsker function class.
 \label{cond::limitspTMLEMissinga} 
\end{condition}
\begin{condition}\textit{Sufficient nuisance rates:}
$\|\pi_n - \pi_P\| = o_P(n^{-1/4})$, $\max_{a \in \{0,1\}}\|\mu_{n,adj}^*(a, \cdot) - \mu_{adj,P}(a,\cdot)\| = o_P(n^{-1/4})$, $\|\Pi_n - \Pi_P\| + \|\overline{\mu}_n^* - \overline{\mu}_P\| = o_P(1)$ and  $\|\Pi_n - \Pi_P\|\|\overline{\mu}_n^* - \overline{\mu}_P\|= o_P(n^{-1/2})$. \label{cond::limitspTMLEMissingb}
\end{condition}

\begin{theorem}
Assume that $P \in \mathcal{M}_{adj}$ and Conditions  \ref{cond::limitFeature},  \ref{cond::limitspTMLEMissingposBounded }-\ref{cond::limitspTMLEMissingb} hold. Then,
$$\sqrt{n} \left(\beta_{adj}(P_{n}^*) - \beta_{adj}(P) \right) \longrightarrow_d N\left(0, \text{cov}(D_{adj,P} ) \right),$$
where the convergence is jointly in distribution as a random vector and $\text{cov}(D_{adj,P}) \in \mathbb{R}^{s\times s} $ is the covariance matrix $\text{cov}(D_{adj,P}(\cdot)) := E_P\left[D_{adj,P}(O)D_{adj,P}^{T}(O) \right]$.
\label{theorem::limitspTMLEMissing}
\end{theorem}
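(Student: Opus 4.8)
The plan is to establish that the TMLE is asymptotically linear, i.e.
\[\beta_{adj}(P_n^*) - \beta_{adj}(P) = (P_n - P)\, D_{adj,P} + o_P(n^{-1/2}),\]
after which the stated limit law follows by the multivariate central limit theorem applied to the i.i.d.\ mean-zero summands $D_{adj,P}(O_i)$, whose covariance is finite under the boundedness of the outcome regressions, the strong overlap Condition~\ref{cond::limitspTMLEMissingpos}, and Condition~\ref{cond::limitFeature}. The point of departure is the defining property of the targeting iteration of Section~\ref{section::spTMLEMissing}: the final estimator $P_n^*$ solves the estimating equation $P_n D_{adj,P_n^*} = o_P(n^{-1/2})$.

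First I would introduce the exact remainder
\[R_2(\bar P, P) := \beta_{adj}(\bar P) - \beta_{adj}(P) + P\, D_{adj,\bar P},\]
which is the identity expressing that $D_{adj,P}$ is a valid gradient (influence function) of $\beta_{adj}$ at $P$, as asserted by Theorem~\ref{theorem::EIF::informative}. Writing $P D_{adj,P_n^*} = P_n D_{adj,P_n^*} - (P_n - P) D_{adj,P_n^*}$ and splitting $(P_n-P)D_{adj,P_n^*} = (P_n-P)D_{adj,P} + (P_n-P)(D_{adj,P_n^*} - D_{adj,P})$, these combine with the estimating-equation property to give
\[\beta_{adj}(P_n^*) - \beta_{adj}(P) = (P_n-P)\,D_{adj,P} + (P_n-P)(D_{adj,P_n^*} - D_{adj,P}) + R_2(P_n^*,P) + o_P(n^{-1/2}).\]
It then suffices to show that the empirical-process term and the remainder $R_2(P_n^*,P)$ are both $o_P(n^{-1/2})$.

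The empirical-process term $(P_n-P)(D_{adj,P_n^*}-D_{adj,P})$ is controlled by asymptotic equicontinuity. The Donsker Condition~\ref{cond::limitspTMLEMissinga} places the estimated nuisances $\pi_n$, $\Pi_n$, $\overline{\mu}_n^*$, and $\mu_{n,adj}^*$ in a fixed uniformly bounded $P$-Donsker class; combined with the boundedness of the outcome regressions and the strong overlap Condition~\ref{cond::limitspTMLEMissingpos}, which keeps the weights $\Delta/\Pi_n$, the centering map $H_{adj}$, and the scaling matrix $\Lambda_{adj}$ uniformly bounded, standard preservation results imply $D_{adj,P_n^*}$ lies in a Donsker class. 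Since the rates in Condition~\ref{cond::limitspTMLEMissingb} force $L^2(P)$-consistency of each nuisance, hence $\|D_{adj,P_n^*}-D_{adj,P}\| = o_P(1)$, asymptotic equicontinuity yields $(P_n-P)(D_{adj,P_n^*}-D_{adj,P}) = o_P(n^{-1/2})$.

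The crux, and the step I expect to be hardest, is showing $R_2(P_n^*,P) = o_P(n^{-1/2})$. Exploiting the two-part structure of $D_{adj,P}$ in Theorem~\ref{theorem::EIF::informative}---an inverse-probability-weighted correction $\tfrac{\delta}{\Pi_P}[\,j-\overline{\mu}_P\,]$ and a marginalization correction $[\,\overline{\mu}_P-\mu_{adj,P}\,]$---I would expand $R_2$ and use that each residual has conditional mean zero at $P$ (a Neyman-orthogonality/double-robustness property) to factorize the remainder into products of nuisance errors. I expect the dominant contribution to be the doubly robust cross term $\|\Pi_n-\Pi_P\|\,\|\overline{\mu}_n^*-\overline{\mu}_P\|$, arising from pairing the error in the estimated missingness weight against the residual of the estimated inner regression; this is $o_P(n^{-1/2})$ precisely by the product-rate requirement in Condition~\ref{cond::limitspTMLEMissingb}. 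The remaining contributions couple errors in $\pi$ and in $\mu_{adj}$ (entering through $\Lambda_{adj}$, $H_{adj}$, and $\mu_{adj}$ itself) against outcome-regression residuals; each such factor is $o_P(n^{-1/4})$ by Condition~\ref{cond::limitspTMLEMissingb}, so every product is $o_P(n^{-1/2})$. The delicate bookkeeping is to verify that the estimation errors in $\Lambda_{adj}$ and $H_{adj}$---which depend on $\pi$ and the conditional variances $\sigma^2_{adj}$---enter $R_2$ only at second order, which holds because they multiply residuals that are mean-zero at $P$ but must be tracked carefully through the algebra. Once $R_2(P_n^*,P)=o_P(n^{-1/2})$ is established, asymptotic linearity follows and the central limit theorem delivers the claimed $N(0,\text{cov}(D_{adj,P}))$ limit; because $D_{adj,P}$ is only a gradient and need not be the canonical one, no efficiency statement is made.
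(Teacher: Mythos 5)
Your proposal is correct and follows essentially the same route as the paper: the identical asymptotic-linearity decomposition into the solved estimating equation, the Donsker/asymptotic-equicontinuity control of $(P_n-P)(D_{adj,P_n^*}-D_{adj,P})$, and a second-order remainder analysis in which iterated expectations expose the doubly robust cross term $\|\Pi_n-\Pi_P\|\,\|\overline{\mu}_n^*-\overline{\mu}_P\|$ while the remaining terms reduce, via the orthogonality of $H_{adj}$ and the normalization of $\Lambda_{adj}$, to squared nuisance errors controlled by Condition~\ref{cond::limitspTMLEMissingb} (this is exactly the paper's Lemma~\ref{lemma::secondOrder2}, which reduces to Lemma~\ref{lemma::secondOrder1}). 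You correctly leave the efficiency claim out since $D_{adj,P}$ is only a gradient, matching the paper.
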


 \begin{corollary}
Under the conditions of Theorem \ref{theorem::limitspTMLEMissing}, we have the following pointwise limit distribution,
 $$\sqrt{n}\left\{ \log OR_{adj}(P_n^*)(w,t) - \log OR_{adj}(P)(w,t)\right\} \longrightarrow_d N\left(0 ,  \underline{f}^T(w,t)\text{cov}(D_{adj,P})\underline{f}(w,t) \right).$$
 \end{corollary}

Conditions \ref{cond::limitspTMLEMissingposBounded } and \ref{cond::limitspTMLEMissingpos} impose mild conditions on the estimators and data-generating distribution. Condition \ref{cond::limitspTMLEMissinga} can also be relaxed using sample-splitting or cross-fitting. Condition \ref{cond::limitspTMLEMissingb} is similar to the rate conditions of Condition \ref{cond::limitspTMLEc}. Notably, the estimator is doubly robust in the nuisance estimation rates of $\Pi_n$ and $\overline{\mu}_n$. Consequently, as long as $\Pi_n$ is estimated sufficiently fast, the TMLE can remain asymptotically normal even when $\overline{\mu}_n$ is estimated at rates slower than $n^{-1/4}$.

 
 \section{Simulations}

  We assess the performance of the TMLE for the conditional odds ratio $OR_F(P_F)(w,t)$ when there is strain-missingness induced by covariates that are causally affected by vaccination and time-of-infection. We assess the performance of both the TMLE that adjusts for post-infection-informed missingness and the generally-biased TMLE that does not fully adjust for the informative missingness.

We generate the data-structure $(W, A, T, W_T, \Delta, \Delta J)$ as follows. $W = (W_1,W_2)$ is a multivariate truncated normal random variable with bounds $(-1,1)$ and variance vector $(1,1)$ and correlation $-0.5$. We generate the treatment as $A \mid W_1,W_2 \sim \text{Bernoulli}(\text{expit}(0.75\cdot(W_1 + W_2))$, the time-to-event variable as $T \mid W_1, W_2,A \sim \text{Weibull}(\text{shape} =3, \, \text{scale} = 1/\exp\left\{ 0.1*(W_1 + W_2 - 0.5)\right\}) $, and the post-infection covariate
  as $W_T \mid T, A, W \sim \text{Bernoulli}(p_{W_T})$ where $p_{W_T} := 0.1 + 0.85\cdot \text{expit}\left\{-1 + W_1/2 + W_2/2 + 2.5 \cdot A + A \cdot(W_1 - W_2)/4 - 0.5 \right\}.$
  We choose the conditional distribution of the strain $J \mid W, A, T $ marginalized over $W_T$ such that 
  $P_F(J=1 \mid T, \, A,\, W) = \text{expit}(-0.75 + A*(\beta_{0,F} + \beta_{1,F}(T-1)) + (W_1 + W_2 + T-1)/2)$
  where $\beta_{0,F} = 1/2$ and $\beta_{1,F} = 1/2$ are the population target estimands of interest. By doing this, we guarantee that the conditional distribution of $J \mid W, A, T$ satisfies the partially linear logistic regression model assumption, which is necessary for consistency of our methods.
 To do this, we set $ P_F(J=1 \mid T, \, A,\, W,\, W_T=0) =\text{expit}(-1.3 + A \cdot(0.3 + (T-1)/6 + W_1/2) (W_1 + W_2 + T -1)/2)$ and determine $P_F(J=1 \mid T, \, A,\, W,\, W_T=1)$ by inverting the identity $P_F(J=1 \mid T, \, A,\, W) = P_F(J=1 \mid T, \, A,\, W, \, \mid W_T = 1)p_{W_T}+  P_F(J=1 \mid T, \, A,\, W, \, \mid W_T = 0)(1-p_{W_t})$. Finally, we generate the strain-missingness as $\Delta \sim \text{Bernoulli}(0.05  + 0.95\cdot \text{expit}(1.5 - 2W_T + (W_1 + W_2 + T-1)/2) - A - A \cdot W_T/2)$. For sample sizes $n = 100, 150,250, 500, 1000, 1500, 2000$ and $2500$, we generated 1000 case-only datasets. For each dataset, approximately half of the viral strain measurements were missing. We then computed the Monte-Carlo bias, standard error, mean squared error, and confidence interval coverage of the following four estimation methods for estimating $\beta_F := (\beta_{0,F}, \beta_{1,F})$.
 
 We compare four estimators for $\beta(P)$ and $\beta_{adj}(P)$ in our study. The first two are TMLE ($\beta$) and TMLE ($\beta_{adj}$), described in Sections \ref{section::spTMLE} and \ref{section::spTMLEMissing}, respectively. The third estimator is glmNaive ($\beta_{adj}$), which estimates $\beta_{0,F}$ and $\beta_{1,F}$ using interaction coefficients in a misspecified logistic-regression model for $J \mid W_T, T, A, W, \Delta = 1$. This estimator is naive since it adjusts for the post-treatment variables (i.e., bad controls) directly within the logistic regression model \citep{cinelli2020crash}. The fourth estimator is glm $(\beta)$, which estimates $\beta_{0,F}$ and $\beta_{1,F}$ using interaction coefficients in a misspecified logistic-regression model for $J \mid A, T, W, \Delta = 1$. The logistic regression model of the third estimator includes all main-terms of $(W, W_T, T, A)$ and all interactions between $A$ and $W_T$, and the model of the fourth estimator includes all main-term and two-way interactions of the variables $(W, T, A)$. We provide empirical bias, standard error, root-mean-squared error, and 95\% confidence interval coverage for $\beta_0$ and $\beta_1$ for all estimators in Figures \ref{fig:inter} and \ref{fig:cov}.
 
 
 \begin{figure}[H]
     \centering
    \begin{subfigure}{0.45\linewidth}
     \includegraphics[width =  \linewidth]{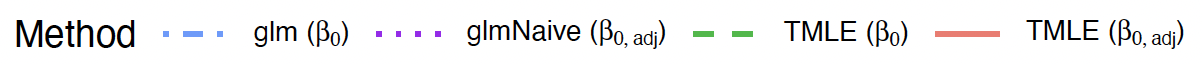}
      \includegraphics[width = 0.5 \linewidth]{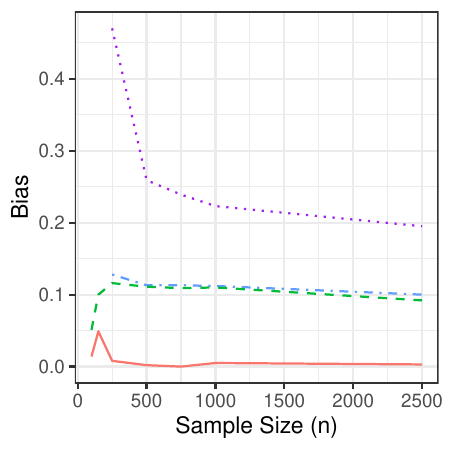}\includegraphics[width = 0.5 \linewidth]{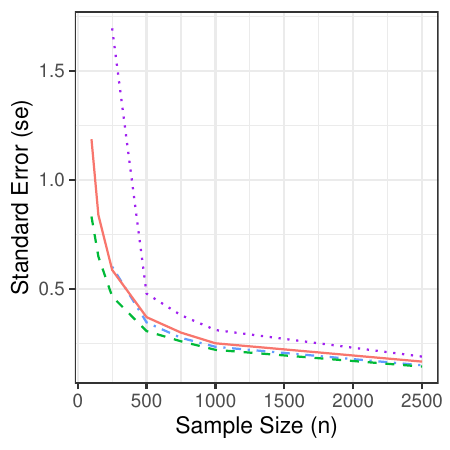}
      \includegraphics[width = 0.5 \linewidth]{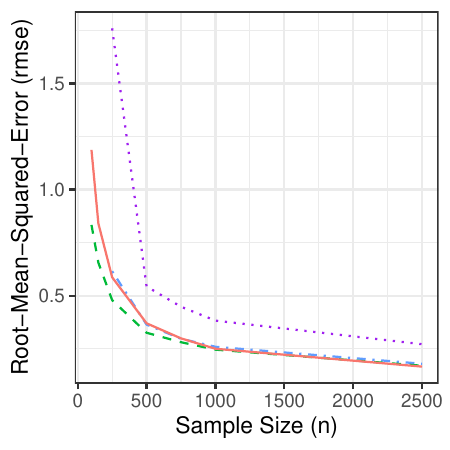}\includegraphics[width = 0.5 \linewidth]{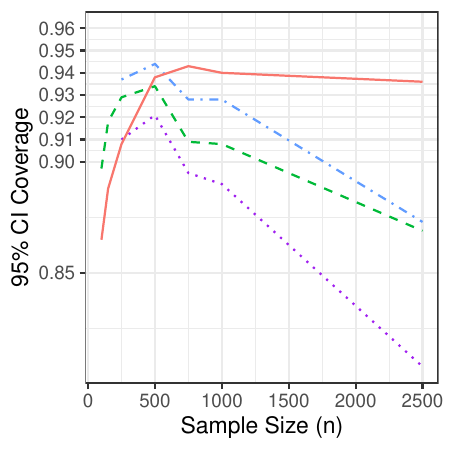}
     \caption{Evaluation of TMLEs for $\beta_{0,F}$.}
      \label{fig:inter}
  \end{subfigure}\begin{subfigure}{0.45\linewidth}
  \includegraphics[width =  \linewidth]{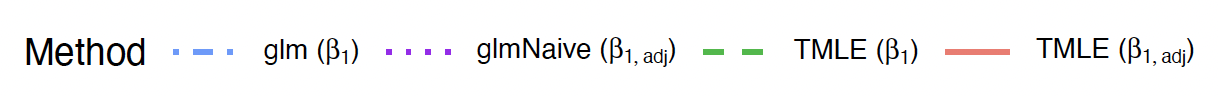}
      \includegraphics[width = 0.5 \linewidth]{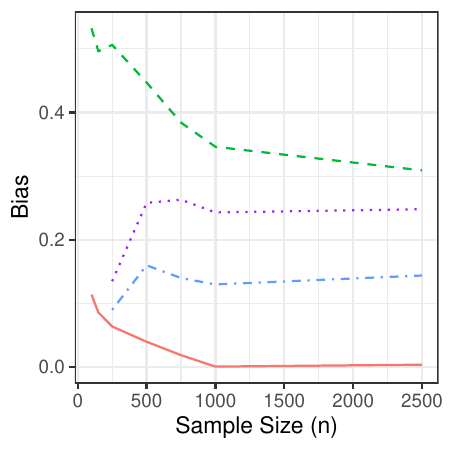}\includegraphics[width = 0.5 \linewidth]{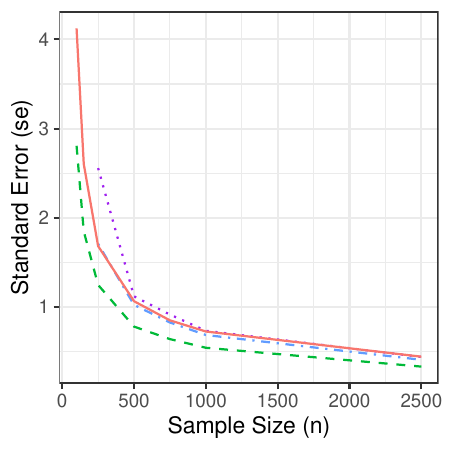}
      \includegraphics[width = 0.5 \linewidth]{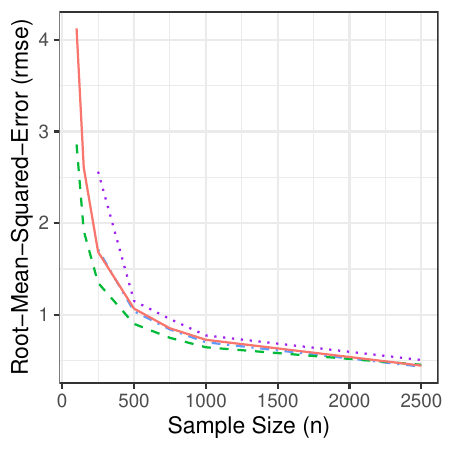}\includegraphics[width = 0.5 \linewidth]{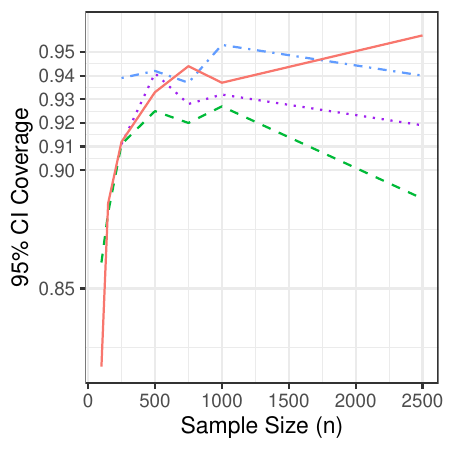}
      \caption{Evaluation of TMLEs for $\beta_{1,F}$.}
      \label{fig:cov}
  \end{subfigure}
   \caption{Empirical bias, standard error, root-mean-squared-error and 95\% confidence coverage for various estimators of the intercept and coefficient of time variable for the conditional odds ratio based on 1000 Monte-Carlo simulations.}
  \end{figure}

 Figures \ref{fig:inter} and \ref{fig:cov} demonstrate that TMLE $(\beta_{adj})$ is the only estimator that appears to be unbiased when adjusting for post-infection informed missingness. However, this better performance comes at the cost of a larger standard error, as expected by theory. On the other hand, glmNaive $(\beta_{adj})$, which naively adjusts for post-treatment variables, performs poorly overall and should not be used in practice. The TMLE $(\beta)$ performs best in root-mean-squared error and standard error, despite its semiparametric model being misspecified. However, TMLE $(\beta_{adj})$ performs significantly better than TMLE $(\beta)$ in bias and 95\% confidence interval coverage, especially for larger sample sizes. In practice, choosing between TMLE $(\beta_{adj})$ and TMLE $(\beta)$ involves a trade-off between bias due to post-treatment-induced confounding and estimator standard error. Regarding confidence interval coverage, TMLE $(\beta_{adj})$ may generally be preferred. The misspecified estimator glm $(\beta)$ performs reasonably well in MSE relative to TMLE $(\beta_{adj})$, but it had poor confidence interval coverage in larger samples. Furthermore, for sample sizes smaller than 250, glm $(\beta)$ was ill-defined because the number of interaction variables was too large. Notably, TMLE $(\beta)$ and TMLE $(\beta_{adj})$ have the benefit of allowing for the use of high dimensional machine-learning algorithms like LASSO and thus are applicable in such settings where ordinary logistic regression fails. Interestingly, TMLE $(\beta)$ outperforms glm $(\beta)$ in standard error, suggesting that TMLE $(\beta)$ may sometimes be more efficient than glm $(\beta)$ when the latter estimator is misspecified.

 \section{Application}
 We apply the newly developed TMLEs 
  to the ENSEMBLE randomized, placebo-controlled COVID-19 vaccine efficacy trial in the U.S., South Africa, and Latin America (Sadoff et al., 2022).\nocite{sadoff2022final}  The primary objective of ENSEMBLE was to assess the efficacy of the Ad26.CoV2.S vaccine to reduce the rate of virologically-confirmed moderate-to-severe COVID-19 occurring at least 14 days and at least 28 days after a single dose of the vaccine or placebo, in people not previously infected with SARS-CoV-2 as documented by negative results of diagnostic tests applied to blood samples drawn at enrollment. 
 At Latin American study sites, several viral strains (``lineages") of the SARS-CoV-2 virus circulated and caused moderate-to-severe COVID-19 endpoints at substantial prevalence: the original Wuhan/ancestral lineage from which the SARS-CoV-2 reference strain was derived for engineering into the Ad26.CoV2.S vaccine construct, and four variant strains that emerged (gamma, lambda, mu, zeta) (see Figures 1 and 3 of Sadoff et al., 2022, where we use the term ``ancestral lineage" to denote all strains close genetically to the vaccine-strain, comprising the ``reference strain", ``other", and ``other + E484K" in the Sadoff et al. nomenclature).  Another objective of the ENSEMBLE study focusing on Latin America is to compare vaccine efficacy between the ancestral lineage and each of the variants, to understand whether and how much vaccine efficacy was abrogated by emerging variants. As a randomized controlled trial, this objective could be assessed based on survival analysis methods, as done in Sadoff et al. (2022)\nocite{sadoff2022final} and Magaret et al (2024)\nocite{magaret2024quantifying}.  Here, we analyze the same data set, coarsened only to include the information on outcome cases (moderate-to-severe COVID-19).  In so doing, we pretended that the study was a case-only observational study, which usefully provides the opportunity to compare the case-only results to results obtained based on the full randomized controlled trial data. For our analysis, we restrict to studying the relative conditional vaccine efficacy in Latin America, since the variants gamma, lambda, mu, and zeta only circulated in Latin America, not at the study sites in South Africa and the United States. We focus on the study endpoint occurring at least 14 days after a single dose of the vaccine or placebo.   
 
 In applying the TMLE methods, we make the assumption that the conditional odds ratio/relative strain-specific conditional vaccine efficacy is constant (i.e. an intercept model in the partially linear logistic regression) and report estimates of this constant parameter. To account for confounding due to informative lineage missingness and improve efficiency, we adjust for the following baseline variables: participant age, 
 sex, race, country, baseline risk score, time of COVID-19 endpoint since the first person enrolled (September 21, 2020), and calender time at enrollment. For the zeta variant, we do not adjust for country since zeta only appeared in some Latin American countries. The baseline risk score  is the same as used in \cite{FongRiskScoreEnsemble} that was built from the placebo arm using machine learning techniques. In the context of semiparametric logistic regression, efficiency depends on the choice of adjustment variables $W$. Adjusting for \( W \) correlated with \( J \) but not \( A \) improves efficiency, while adjusting for \( W \) correlated with \( A \) but not \( J \) can reduce precision without reducing bias. This follows standard confounding adjustment principles: adjust for predictors of the outcome but not for predictors of the exposure that do not influence the outcome \citep{vanderweele2019principles}. We also adjust for the post-treatment variable the SARS-CoV-2 viral load measured from a nasal swab sample taken at the time of detection of COVID-19. In semiparametric logistic regression, efficiency depends on the choice of adjustment variables \( W \). To ensure an unbiased estimate, \( W \) should be correlated with both the treatment \( A \) and the mark \( J \).   

  \begin{figure}[H]
     \centering
     \includegraphics[width=0.5\linewidth]{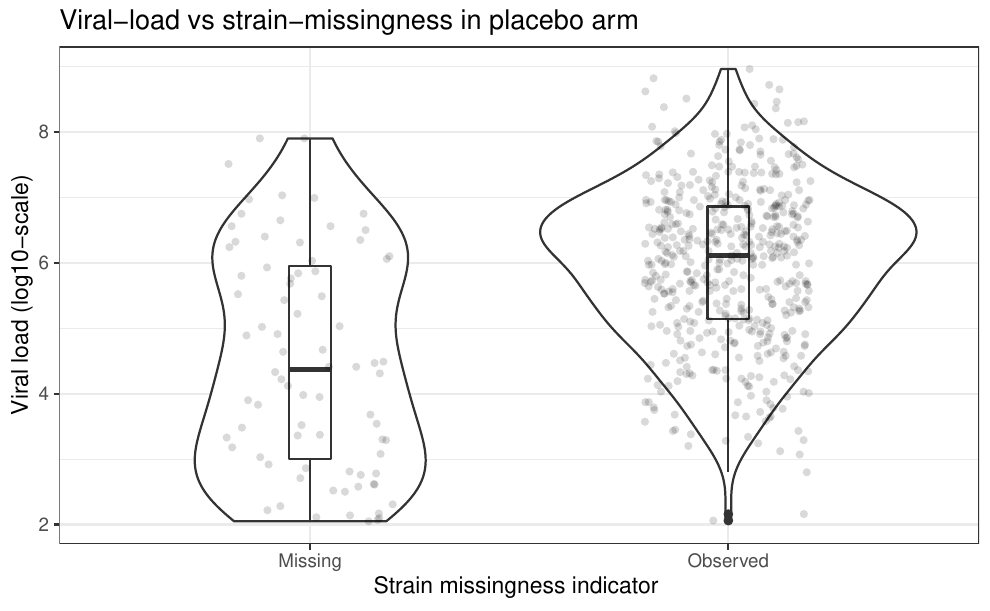}\includegraphics[width=0.5\linewidth]{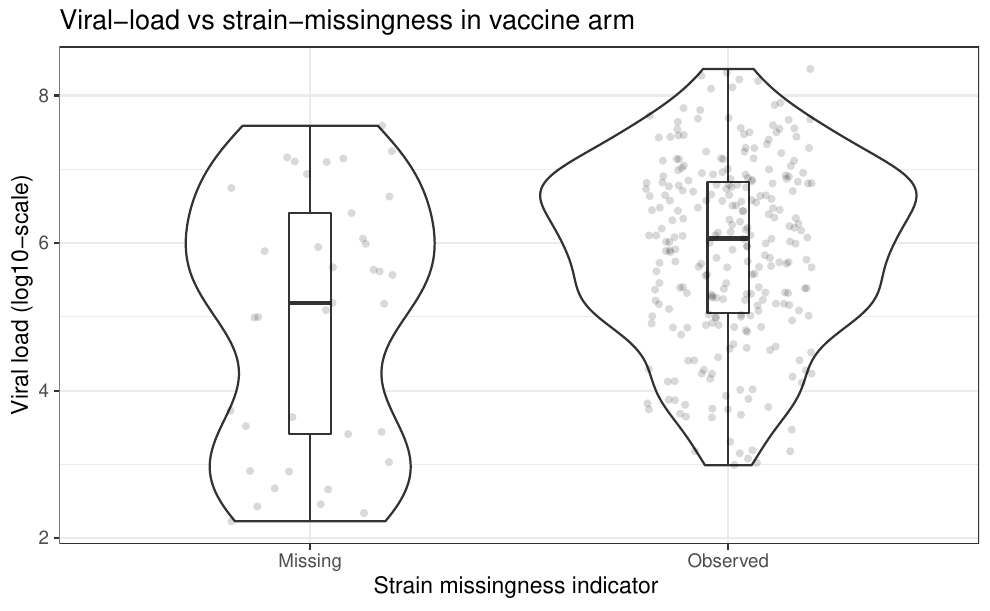}
     \caption{Box and Violin plots of SARS-CoV-2 viral load stratified by vaccination status and the indicator of the infection-causing strain/lineage being observed: ENSEMBLE study Latin America.}
     \label{fig:Viralloadmissing}
 \end{figure}
 
   \begin{figure}[H]
     \centering
     \includegraphics[width=0.7\linewidth]{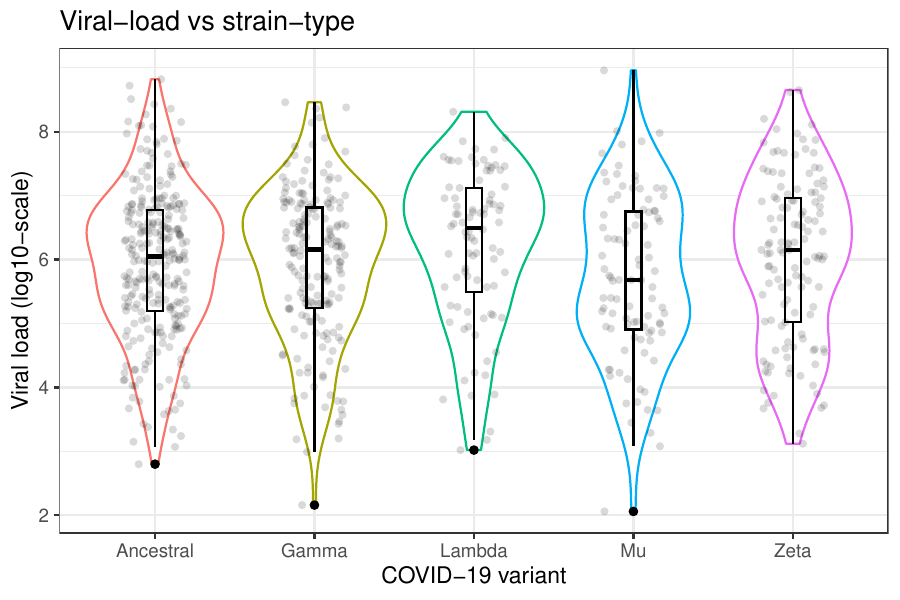}
     \caption{Box and Violin plots of SARS-CoV-2 viral load stratified by the indicator of the infection-causing strain/lineage among cases with an observed strain: ENSEMBLE study Latin America.}
     \label{fig:Viralloadcorrelation}
 \end{figure}

 Figures \ref{fig:Viralloadmissing} and \ref{fig:Viralloadcorrelation} provide box and violin plots showing how the distribution of SARS-CoV-2 viral load at COVID-19 endpoint detection is modified by the strain-missingness indicator ($\Delta$) and by the SARS-CoV-2 lineage among vaccinated and unvaccinated individuals. The distribution of viral load does not appear to differ much between the vaccine and placebo arms. Viral load tends to be lower among individuals whose strain type is missing (Figure \ref{fig:Viralloadmissing}). This is expected since the amplification technology for measuring the SARS-CoV-2 sequence is more likely to succeed if there is more SARS-CoV-2 viral material in the sample. On the other hand, we observe that the distribution of viral load is similar between individuals infected by the ancestral strain compared to with non-ancestral strains (Figure \ref{fig:Viralloadcorrelation}). This suggests that viral load, although highly predictive of strain-missingness, is not very predictive of the strain type and therefore does not lead to significant confounding bias in this study.

 
\begin{center}
\begin{table}[H] \centering
\caption{TMLE estimates of the conditional odds ratio [ = relative strain-specific conditional vaccine efficacy against the ancestral COVID-19 strain ($J=1$) compared to against a variant strain or strains ($J=0$)]: ENSEMBLE study Latin America}
\label{sieveresults}
{\small
\begin{tabular}{lccccccccc} \hline \hline
Strain  & \multicolumn{3}{c}{TMLE w/ inform. missingness} &
\multicolumn{3}{c}{TMLE w/o inform. missingness} & \multicolumn{3}{c}{Failure time analysis$^1$}   \\ 
Comparison$^2$ & Rel. VE & 95\% CI & P-value & Rel. VE & 95\% CI & P-value & Rel. VE & 95\% CI & P-value \\ \hline 
Anc vs else & 0.56 & (0.38, 0.81) & 0.002 & 0.54 & (0.35, 0.83) & 0.005 & 0.62 & (0.46, 0.85)  & 0.0036 \\     
Anc vs gamma    & 0.48 & (0.30, 0.78) & 0.003 & 0.50 & (0.29, 0.87) & 0.015 &  0.58 & (0.39, 0.85)  & 0.0062 \\
Anc vs lambda   & 0.26 & (0.15, 0.48) & $<$ 0.001 & 0.28 & (0.13, 0.59) & 0.001 & 0.39 & (0.24, 0.64)  & $<$ 0.001 \\
Anc vs mu        &  0.61 & (0.31, 1.2) & 0.162 & 0.57 & (0.24, 1.3) & 0.188 & 0.58 & (0.36, 0.94)  & 0.027 \\
Anc vs zeta     &  0.95 & (0.59, 1.54) & 0.83 & 0.96 & (0.56, 1.16) & 0.87  & 1.01 & (0.63, 1.61)  & 0.96 \\ \hline 
\end{tabular}
\newline
\noindent $^1$Competing risks Cox modeling accommodating viral-load-dependent-strain-missingness by augmented inverse probability weighting (Heng et al., 2025)\nocite{Hengetal2025} \newline
\noindent $^2$Number of COVID-19 infection endpoints in the (vaccine, placebo) arm by lineage: ancestral (72, 196), gamma (73, 111), lambda (43, 45), mu (38, 57), zeta (33, 92), non-ancestral (187, 305)
}
\end{table}
\end{center}

 Table \ref{sieveresults} gives estimates of relative strain-specific conditional vaccine efficacy (relVE) against moderate-to-severe COVID-19 for the ancestral strain ($J=1$) relative to both individual and pooled variant strains ($J=0$). Uncertainty in the estimate is assessed through 95\% confidence intervals and  p-values for the null hypothesis that the strain-specific conditional vaccine efficacy is identical for the two strain types being compared. We note that the following relative conditional vaccine efficacy interpretations do not necessarily carry forward to other endpoints such as severe COVID-19 disease. We observe that the relVE point estimates are below one for all strain comparisons. 
 As measured by relVE$\times$ 100\%, for the pooled-variants analysis the level of vaccine protection against non-ancestral strains is 
 estimated to be 56\% (95\% CI: 38\%, 81\%) of that achieved against the ancestral strain. For the individual variant strains gamma, lambda, mu, and zeta, respectively, 
 the level of vaccine protection against the variant is estimated to be 48\% (30\%, 78\%), 26\% (15\%, 48\%), 
 61\% (31\%, 120\%), and 95\% (59\%, 154\%) of that achieved against the ancestral strain.  These results suggest that in general vaccine efficacy for moderate-to-severe COVID-19 is modestly better against the ancestral strain than against variants (p=0.002),
 with vaccine efficacy most abrogated against the lambda variant (p $<$ 0.001) and more modestly against the gamma variant (p = 0.003), and there is inconclusive evidence for any efficacy abrogation against the mu (p=0.162) and zeta (p=0.83) variants.

For comparison we also provide estimates of the relative strain-specific conditional vaccine efficacy using the TMLE of Section \ref{section::spTMLE}, which does not adjust fully for informative missingness. From Table \ref{sieveresults},
we observe that the estimates using both methods are very similar, which supports our earlier claim that viral load is not a significant confounder. In addition, we observe that the confidence intervals for the TMLE of Section \ref{section::spTMLE} are wider than those for the novel TMLE of Section \ref{section::spTMLEMissing}, which suggests that the latter method, although having a negligible bias-reduction, does lead to gains in efficiency.  Lastly, the results based on the full survival analysis data set (Table \ref{sieveresults}) yields comparable point estimates of relative vaccine efficacy, with narrower 95\% CIs and lower p-values, which is expected given the additional information in the data set.

\section*{Acknowledgments} 
Research reported in this publication was supported by the National Institute Of Allergy And Infectious
Diseases of the National Institutes of Health under Award Number R37AI054165 and 
the U.S. Public Health Service Grant AI068635. 
The content is
solely the responsibility of the authors and does not necessarily represent the official views of
the National Institutes of Health.  We thank the ENSEMBLE study participants, study team, and sponsors, especially Janssen statistics including Sanne Roels and An Vandebosch, and thank Li Li and Michal Juraska for survival analysis of ENSEMBLE. 

\newpage

\bibliography{ref}

  \appendix

 \section{Constructing semiparametric substitution estimators for the conditional odds ratio}
 \label{section::pluginestimator}
 In this section, we describe how to obtain substitution estimators for the conditional odds ratio parameters using machine-learning algorithms under the statistical models $\mathcal{M}$ and $ \mathcal{M}_{adj}$. For both models, the method will require a machine-learning estimator that respects the partially linear logistic-link model constraint.

\subsection{Substitution estimators of the conditional odds ratio under $\mathcal{M}$}

We begin with the simplest identification result, given in Theorem \ref{theorem::ident}. In particular, we aim to estimate $\beta$ such that $ \mu_P(a,w,t) = \text{expit}\left\{ a\beta^T \underline{f}(w,t) + h_{P}(w,t) \right\}$, which under causal assumptions identifies $\beta_{F}$ in the expression for 
$OR_F(P_F)$. For a given nuisance function class $\mathcal{H}$ (e.g. the class of functions of bounded variation), define
$(\hat \beta_{n}, \hat h_{n}) := \argmin_{\beta(P)\in \mathbb{R}^s, h \in \mathcal{H}}  \mathcal{R}_n(\beta, h)$
where $\mathcal{R}_n(\beta, h) =$
$$\frac{1}{n}\sum_{i=1}^n R_i \Delta_i \left\{ \log\left\{1 + \exp\left(A_i\beta^T \underline{f}(W_i,T_i) + h(W_i,T_i) \right) \right\} - J_i \left(A_i\beta^T \underline{f}(W_i,T_i) + h(W_i,T_i) \right) \right\}.$$ This risk minimization problem is equivalent to performing the partially-linear logistic regression of $J$ on $(T,A,W)$ using only the observations with $(\Delta=1,R=1)$. This gives rise to an estimator $\hat \mu_{n}(a,w,t) := \text{expit} \left\{  a\hat \beta_{n}^T \underline{f}(w,t) + \hat h_{n}(w,t)\right\} $ of $\mu_P(a,w,t)$, which importantly respects the constraints of the statistical model $\mathcal{M}$. Under the conditions of Theorem \ref{theorem::ident}, the relative conditional vaccine efficacy can then be estimated by the conditional odds ratio substitution estimator,
$$\widehat{OR}_{n}(w,t) := \frac{\hat \mu_{n}(t,1,w)(1-\hat \mu_{n}(t,0,w))}{\hat \mu_{n}(t,0,w)(1-\hat \mu_{n}(t,1,w))} = \exp \left\{ \hat \beta_{n}^T \underline{f}(w,t) \right\}.$$

\subsection{Substitution estimators of the conditional odds ratio under $ \mathcal{M}_{adj}$}

Next, we consider the more general identification result, given by Theorem \ref{theorem::ident2}. For estimation of the relative conditional vaccine efficacy, we utilize a two-stage sequential regression approach. Recall, $ \mu_{adj,P}(a,w,t) =  \text{expit}\left\{ a\beta^T \underline{f}(w,t) + h(w,t) \right\} $. Let $\overline{\mu}_n$ be an arbitrary initial estimator of $\overline{\mu}$ obtained, for instance, using machine-learning algorithms, e.g., gradient-boosting, generalized additive models, the highly adaptive lasso, or ensemble methods like SuperLearner (van der Laan et al., 2007)\nocite{vanderLaanetal2007}. $\overline{\mu}_{n}$ can be obtained by performing the nonparametric (logistic) regression of $J$ on $(T,W_T,A,W)$ using the observed cases with no missing strain types ($R=1, \Delta = 1$). Note, $\overline{\mu}$ is \textit{not} necessarily a partially linear logistic-link model under our statistical model assumptions, and therefore $\overline{\mu}$ should be estimated in a fully nonparametric way. We now utilize sequential regression to obtain an estimator of $\mu_{adj,P} $. Define the pseudo-outcome logistic risk function,
$\mathcal{R}_{n,  \overline{\mu}_n} (\beta, h) = $
$$\sum_{i=1}^n   \left\{\log\left\{1 + \exp\left(A_i\beta^T \underline{f}(W_i,T_i) + h(W_i,T_i) \right) \right\} - \overline{\mu}_n(T_i,W_{T,i}, A_i, W_i) \left(A_i\beta^T \underline{f}(T_i,W_i) + h(T_i,W_i) \right) \right\}.$$
For a given nuisance function class $\mathcal{H}$, define
$(\hat \beta_{n, adj}, \hat h_{n, adj}) := \argmin_{\beta(P)\in \mathbb{R}^s, h \in \mathcal{H}}  \mathcal{R}_{n,  \overline{\mu}_n}(\beta, h).$
In other words, $(\hat \beta_{n, adj}, \hat h_{n, adj})$ is obtained by performing the partially-linear logistic regression of the pseudo-outcome $\overline{\mu}_n(T,W_{T}, A, W)$ on $(T,A,W)$ using only the observations with $R=1$. This gives the estimator $\hat \mu_{n, adj}(a,w,t) := \text{expit} \left\{  a\hat \beta_{n, adj}^T \underline{f}(w,t) + \hat h_{n, adj}(w,t)\right\} $ of $\mu_{adj,P}(a,w,t)$, which importantly respects the constraints of the statistical model $ \mathcal{M}_{adj}$. Under the conditions of Theorem \ref{theorem::ident2}, the relative conditional vaccine efficacy can then be estimated by odds ratio substitution estimator,
$$\widehat{OR}_{n, adj}(w,t) := \frac{\hat \mu_{n, adj}(t,1,w)(1-\hat \mu_{n, adj}(t,0,w))}{\hat \mu_{n, adj}(t,0,w)(1-\hat \mu_{n, adj}(t,1,w))} = \exp \left\{ \hat \beta_{n, adj}^T \underline{f}(w,t) \right\}.$$

\section{Proofs of identification results}

\begin{proof} [Proof of  \textbf{Theorem \ref{theorem::ident}}]

The weak overlap assumptions (\ref{cond::ident::C0}, \ref{cond::ident::C1}, \ref{cond::ident::C2}) ensure that the following conditional expectations are well-defined. By Assumption \ref{cond::ident::C3}, we have $P( J=1|\Delta = 1, R=1, A, W,T) = P( J=1| R=1, A, W,T)$. Therefore, we also have 
$$OR(P)(W,T) = \frac{P( J=1| R=1, A=1, W,T)/P(J=0|R=1, A=1, W,T)}{P( J=1|R=1, A=0, W,T)/P( J=0|R=1,A=1, W,T)}.$$
By Bayes rule and Assumption \ref{cond::ident::C4}, $$P(J=1| R=1, A, W, T=t) = P(J=1| T=t , A, W)  = P(J=1, T =t|  A, W)/ P(T = t| A,W).   $$
A similar result holds for $P(J=0| R=1, A, W, T)$. Substitution into the previous expression for $OR(P)(W)$ and canceling terms  gives
$$OR(P)(W,t) = \frac{P( J=1, T= t|  A=1, W)/P( J=1, T = t| A=0, W) }{P(J=0, T= t|R=1, A=1, W)/P( J=0, T= t|A=0, W)}$$
from which the desired identification result now follows.
 \end{proof}

\begin{proof} [Proof of \textbf{Theorem \ref{theorem::ident2}}]
 The weak overlap assumptions (\ref{cond::ident::C0}, \ref{cond::ident::C2}, \ref{cond::ident::D1}) ensure that the following conditional expectations are well-defined. By Assumptions \ref{cond::ident::C4} and \ref{cond::ident::D3}, we have 
 $$P( J=1| R=1, A=a, W, T) = E\left[P( J=1|R=1, W_T, T, A , W)| R=1, A=a, W, T)\right] $$
 $$=  E\left[P( J=1|\Delta = 1, R=1, W_T, T, A, W)| R=1, A=a, W, T)\right].$$
 The first equality follows from Assumption \ref{cond::ident::C4} and the second equality follows from Assumption \ref{cond::ident::D3}. The remainder of the proof follows exactly as in the proof of Theorem \ref{theorem::ident}.
 
 \end{proof}

 \begin{proof}[Proof of  \textbf{Theorem \ref{theorem::CausalIdent}}]
Under the assumed assumptions, we have
\begin{align*}
  &\hspace{-1cm} P_F(T_t(a) = t, J_t(a) = j \mid T \geq t, A(t-)=0, W) \\
  & =  P_F(T_t(a) = t, J_t(a) = j \mid T \geq t, A(t) = a, A(t-)=0, W) \\
    &= P_F(T = t, J = j \mid T \geq t, A(t) = a, A(t-)=0, W =w)\\
    &= P_F(T = t, J = j \mid T \geq t, A(t) = a, W =w).
\end{align*}
The first equality follows from Assumption \ref{cond::CausalExchange}, the second from Assumption \ref{cond::Causalconsistency}, and the final from Assumption \ref{cond::indepVacTime}. Next, by Markov's inequality, we have
\begin{align*}
    &\hspace{-1cm} P_F(T = t, J = j \mid T \geq t, A(t) = a, W =w) \\
    &=  P_F(  J = j \mid  T  =t, T \geq t, A(t) = a, W =w)P_F(T = t  \mid T \geq t, A(t) = a, W =w).
\end{align*}  
Note that $P_F(T = t  \mid T \geq t, A(t) = a, W =w)$ depends only on the treatment level $a$ and not the strain level $j$. The desired result can be obtained by substituting the above expressions and noting that contributions due to $P_F(T = t  \mid T \geq t, A(t) = 1, W =w)$ and $P_F(T = t  \mid T \geq t, A(t) = 0, W =w)$ cancel in the quotient.

 \end{proof}

\section{Derivations of influence functions}

 \begin{proof}[ \textbf{Proof of Theorem \ref{theorem::EIF::uninformative}}]
    The given efficient influence function is the efficient influence function of the coefficient of partially linear logistic regression model, which is well-studied. Working conditional on $\{R=1, \Delta = 1\}$, the result follows from the derivation given on pages 621-622 and 626-629 of the working paper \cite{OddsRatioreadingsTMLE}.
 \end{proof}

 \begin{proof}[\textbf{Proof of Theorem \ref{theorem::EIF::informative}}]
Without loss of generality, we assume that the missingness identification results of Theorems \ref{theorem::ident} and \ref{theorem::ident2} hold. As these causal assumptions are untestable, we can do this with no loss of generality \citep{vanderlaanunified}. Following \cite{vanderlaanunified}, our proof technique is to map the efficient influence function of $\beta_{adj}(P)$ under a model where all variables are fully observed to an influence function (i.e. gradient) of $\beta_{adj}(P)$ under a model where there is strain and case missingness.

To this end, we first consider the case where we observe the data-structure $(W,A,T,J) \sim P_F$, which has no missingness. By Theorem \ref{theorem::EIF::uninformative}, the function 
$$o \mapsto D_{adj,P}(o) :=   \Lambda_{P} \underline{f}(w,t) \cdot  \delta H_{adj,P}(a,w,t) \left[ j  - \mu_{P}(a,w,t)\right] $$
is the efficient influence function under the semiparametric statistical model that assumes $\mu_{P_F}(W,A,T) := P_F(J=1\mid W, A, T) = \text{expit}\left\{h_{P_F}(W,T) + A \cdot \beta_{adj}(P_F)^T\underline{f}(W,T) \right\}$ and has $P_F(\Delta = 1) = 1$. For ease of notation, we suppress the dependence on the feature vector $\underline{f}$ and denote
$$\widetilde{H}_{adj,P}(a,w,t) := \underline{f}(w,t) \cdot  H_{adj,P}(a,w,t) $$

Since $W_T$ provides no additional information about the parameter $\beta_{adj}(P_F)$ when there is no missingness, we have $D_{adj,P}$ is also that EIF for the data-structure $(W,A,T,W_T, J)$ under the same semiparametric model assumptions. Now consider the coarsened data-structure $(W,A,T, W_T, \Delta, \Delta J)$. By Rose, van der Laan (2011)\nocite{RosevanderLaan2011Missing} (See also van der Laan, Robins (2003)), the inverse probability of coarsening/missingness (IPC)-weighted influence function
 $$D_{IPCW,P}(o):= \frac{\delta}{\Pi_P(w_T,t,a,w)}  \Lambda_P \widetilde H_P(a,w,t)[j - \mu_P(a,w,t)]  $$
 is a gradient for the parameter $\beta_{adj}(P)$ under $ \mathcal{M}_{adj}$ (allowing for missingness). 
 
 Since $\beta_{adj}$ only depends on the conditional distribution of $\Delta J \mid W_T, T, A, W$ and $W_T \mid T, A, W$, the efficient influence function of $\beta_{adj}$ under $ \mathcal{M}_{adj}$ is contained in the direct sum of the tangent spaces $T_{\Delta J} \mathcal{M}_{adj}$ and $T_{W_T} \mathcal{M}_{adj}$ of the statistical model $ \mathcal{M}_{adj}$ obtained by taking scores along regular paths (or submodels) that fluctuate these conditional distributions (Bickel et al., 1993, \cite{vanderlaanunified}). These tangent spaces are contained in the tangent spaces of a nonparametric statistical model $\mathcal{M}_{np}$ given by
 \begin{align*}
T_{\Delta J}\mathcal{M}_{np} &:= \left\{o \mapsto h(w,a,t,w_T,\delta, \delta j) \in L^2(P): E[h(W,A,T, W_T,\Delta, \Delta J)|W,A,T, W_T, \Delta]=0 \right\},\\
T_{W_T}\mathcal{M}_{np} &:= \left\{o \mapsto h(w,a, t, w_T) \in L^2(P): E[h(W,A,T, W_T)|X,A]=0 \right\}.
 \end{align*}

The orthogonal projection of the IPCW gradient $D_{IPCW}$ onto these tangent spaces is also a gradient for $\beta_{adj}$; This follows since the projection only removes components of $D_{IPCW,P}$ that are orthogonal to the direct sum of the tangent spaces $T_{\Delta J}\mathcal{M}_{np}$ and $T_{W_T}\mathcal{M}_{np}$, which is contained in the orthogonal compliment of the direct sum of the tangent spaces $T_{\Delta J} \mathcal{M}_{adj}$ and $T_{W_T} \mathcal{M}_{adj}$. Now, the projections onto the two nonparametric tangent space \citep{vanderlaanunified} are known to be
\begin{align*}
    h \mapsto \Pi_{\Delta J}(h) &= h(W,A,T, W_T,J) - E[h(W,A,T, W_T,J)|W,A,T, W_T];\\
    h \mapsto \Pi_{W_T}(h) &= h(W,A,T, W_T) - E[h(W,A,T, W_T)|W,A].
\end{align*}
Computing the projections gives
\begin{align*}
    \Pi_{\Delta J} D_{IPCW}(O) + \Pi_{W_T} D_{IPCW}(O)  & = \frac{\Delta}{\Pi(W_T,A,T,W)}  \Lambda_{adj,P} \widetilde H_{adj, P} (A,T,W)\left[J - \overline{\mu}(W_T, A,T,W) \right]\\
 & \quad + \Lambda_{adj,P} \widetilde H_{adj, P} (A,T,W)\left[\overline{\mu}(W_T, A,T,W) - \mu_{adj,P}(A,T,W) \right].
\end{align*}
It follows that a gradient in the semiparametric statistical model $ \mathcal{M}_{adj}$ is, viewed as a function of the random variable $O$, given by
\begin{align*}
    D_{adj,P}  & :=   \frac{\Delta}{\Pi(W_T,A,T,W)}  \Lambda_{adj,P} \widetilde  H_{adj, P} (A,T,W)\left[J - \overline{\mu}(W_T, A,T,W) \right]\\
& \quad + \Lambda_{adj,P} \widetilde  H_{adj, P} (A,T,W)\left[\overline{\mu}(W_T, A,T,W) - \mu_{\beta_{adj},h_{adj,P}}(A,T,W) \right].
\end{align*}
This completes the proof.

 \end{proof}

  \section{Proof of limiting distributions of TMLEs}

Let $\widehat{P}_{n}^*$ be the targeted estimator of $P_0$ obtained using the method of Section \ref{section::spTMLE}. Similarly, let $\widehat{P}_{n,adj}^*$ be the targeted distribution obtained using the method of Section \ref{section::spTMLE}. Let $\Lambda_{P_n^*}$ and $\Lambda_{adj,P_n^*}$ be estimators of the scaling matrices $\Lambda_{P}$ and $\Lambda_{adj,P}$ that are compatible with $\widehat P_{n}^*$ and $\widehat P_{n,adj}^*$.

\begin{proof}[Proof of Theorem \ref{theorem::limitspTMLE}]
    The following proof is standard and we refer to \cite{vanderLaanRose2011} for details. We have the expansion:
    \begin{align*}
        \beta(P_n^*) - \beta(P)= - P_n D_{P_{n}^*} +  (P_n -  P) D_{P_{n}^*} + \left\{ \beta(P_n^*) - \beta(P)+ P D_{P_{n}^*} \right\}.
    \end{align*}
    The first term is $o_P(n^{-1/2})$ by construction of the TMLE. The third term is $o_P(n^{-1/2})$ by Lemma \ref{lemma::secondOrder1} given after this proof. Thus, we have the expansion:
    \begin{align*}
        \beta(P_n^*) - \beta(P)&=  (P_n -  P) D_{P_{n}^*} + o_P(n^{-1/2})\\
       & =  (P_n -  P) \left\{D_P \right\} +  (P_n -  P) \left\{D_{P_{n}^*} - D_P \right\} + o_P(n^{-1/2}).
    \end{align*}
    Under Condition  \ref{cond::limitspTMLEaBounded}, $D_P$ is uniformly bounded and thus has finite variance. By the CLT, we have $\sqrt{n}(P_n -  P) D_P $ goes to a normally distributed random variable with the desired covariance structure. Thus, by Slutsky's lemma, it suffices to show that 
    $$(P_n -  P) \left\{D_{P_{n}^*} - D_P  \right\} = o_P(n^{-1/2}).$$
    By Condition \ref{cond::limitspTMLEMissinga} and the invariance of Donsker classes under Lipschitz transformation \citep{vanderVaartWellner}, we have that $D_{P_{n}^*} - D_P$ falls in a Donsker function class. Moreover, Condition \ref{cond::limitspTMLEMissingb} implies that $\norm{D_{P_{n}^*} - D_{P_{n}^*}} = o_P(1)$. By asymptotic equicontinuity of empirical processes over Donsker classes, we have that $(P_n -  P) \left\{D_{P_{n}^*} - D_P \right\} = o_P(n^{-1/2})$ as desired. The result now follows.
\end{proof}

\begin{proof}[Proof of Theorem \ref{theorem::limitspTMLEMissing}]
This proof follows by an argument identical to that of Theorem \ref{theorem::limitspTMLEMissing} after substituting Lemma  \ref{lemma::secondOrder1} with Lemma  \ref{lemma::secondOrder2} in the argument.
\end{proof}

\begin{lemma}
Under the conditions of Theorem \ref{theorem::limitspTMLE}, we have the following exact second-order remainder satisfies
 $$\beta(P_n^*) - \beta(P)+ P D_{P_n^*} = o_P(n^{-1/2}).$$
 \label{lemma::secondOrder1}
\end{lemma}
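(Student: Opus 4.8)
The plan is to treat $R(P_n^*,P) := \beta(P_n^*) - \beta(P) + P D_{P_n^*}$ as the exact second-order remainder in the von Mises (functional Taylor) expansion of the map $P' \mapsto \beta(P')$ around $P$. Since $D_P$ is the canonical gradient of $\beta$ under $\mathcal{M}$ (Theorem \ref{theorem::EIF::uninformative}) and $P'D_{P'}=0$ at each $P'\in\mathcal{M}$, the first-order term of this expansion vanishes, so $R(P_n^*,P)$ should reduce to a quadratic form in the nuisance errors $\widetilde{\pi}_n - \widetilde{\pi}_P$ and $\mu_n^* - \mu_P$. The goal is to make this factorization explicit and then bound it by the product of the estimation rates supplied by Condition \ref{cond::limitspTMLEc}. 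This analysis mirrors the remainder computation for the partially linear logistic regression model in \cite{OddsRatioreadingsTMLE}, working conditionally on $\{R=1\}$ throughout.

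First I would write $P D_{P_n^*}$ explicitly. Using $\delta(\delta j - \mu_{P_n^*}) = \delta(j - \mu_{P_n^*})$ and the missing-at-random Assumption \ref{cond::ident::C3}, which yields $E_P[\Delta(J-\mu_{P_n^*})\mid A,W,T] = g_P(A,W,T)\,(\mu_P - \mu_{P_n^*})$ with $g_P(a,w,t):=P(\Delta=1\mid A=a,W=w,T=t,R=1)$, the term collapses to
$$P D_{P_n^*} = \Lambda_{P_n^*}\, E_P\!\left[\underline{f}(W,T)\, g_P(A,W,T)\, H_{P_n^*}(A,W,T)\,\bigl(\mu_P(A,W,T) - \mu_{P_n^*}(A,W,T)\bigr)\right].$$
Next I would expand $\mu_P - \mu_{P_n^*}$ on the logit scale: the partially linear constraints satisfied by both $\mu_P$ and $\mu_{P_n^*}$ give $\text{logit}\,\mu_P - \text{logit}\,\mu_{P_n^*} = a(\beta(P)-\beta(P_n^*))^T\underline{f} + (h_P - h_{P_n^*})$, and a mean-value expansion of $\text{expit}$ converts this into $\mu_P - \mu_{P_n^*} = \sigma^2_{P_n^*}\bigl[a(\beta(P)-\beta(P_n^*))^T\underline{f} + (h_P - h_{P_n^*})\bigr] + r$, where $r$ is quadratic in $\mu_P - \mu_{P_n^*}$ and controlled by Condition \ref{cond::limitspTMLEaBounded}.

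The decisive step is to substitute this expansion back and exploit the defining property of $H_{P_n^*}$. By construction $H_P$ is the residual of $A$ after projecting out the nonparametric score direction, so $E_P[\,g_P\,\sigma^2_P\,H_P(A,W,T)\,\phi(W,T)\,]=0$ for every $\phi$ depending only on $(W,T)$; this annihilates the $h_P - h_{P_n^*}$ (infinite-dimensional nuisance) contribution exactly when $H$, $\sigma^2$, and the weight are evaluated at the truth. The remaining $a(\beta(P)-\beta(P_n^*))^T\underline{f}$ piece, once multiplied by $\Lambda_{P_n^*}$, reconstructs $\beta(P)-\beta(P_n^*)$ up to the discrepancy between $\Lambda_{P_n^*}^{-1}$ and its $P$-analogue; adding $\beta(P_n^*)-\beta(P)$ then cancels this leading piece. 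What survives is a sum of genuinely second-order terms: the orthogonality defect from evaluating $H_{P_n^*}$ and $\sigma^2_{P_n^*}$ at estimated rather than true nuisances (carrying a factor $\widetilde{\pi}_n - \widetilde{\pi}_P$ together with a factor $\mu_n^* - \mu_P$), the quadratic remainder $r$ (of order $\|\mu_n^* - \mu_P\|^2$), and the $\Lambda$-mismatch (again a product of nuisance errors). I would bound each by Cauchy--Schwarz using the uniform bounds of Condition \ref{cond::limitspTMLEaBounded}, then invoke $\|\widetilde{\pi}_n - \widetilde{\pi}_P\| = o_P(n^{-1/4})$ and $\max_{a}\|\mu_n^*(a,\cdot) - \mu_P(a,\cdot)\| = o_P(n^{-1/4})$ from Condition \ref{cond::limitspTMLEc}, so that every product is $o_P(n^{-1/2})$, giving the claim.

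The hard part will be the bookkeeping in the decisive step: verifying that no un-multiplied, first-order error in either $\widetilde{\pi}_n$ or $\mu_n^*$ survives. This is precisely the Neyman-orthogonality/local-robustness structure encoded in the projection defining $H_P$ and the normalizing matrix $\Lambda_P$; the delicacy is that the orthogonality relation holds at the truth while the integrand is evaluated at $P_n^*$, so one must carefully re-center each evaluated nuisance at its true counterpart and confirm that the resulting cross-terms always pair a $\widetilde{\pi}$-error with a $\mu$-error (or produce a squared $\mu$-error), never a lone first-order term. Once this product structure is established, the rate bookkeeping is routine.
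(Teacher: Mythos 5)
Your proposal is correct and follows essentially the same route as the paper's proof: expand $PD_{P_n^*}$ as an integral against $\mu_P-\mu_{P_n^*}$, Taylor-expand on the logit scale so that $\sigma^2_{P_n^*}$ appears, use the orthogonality of $H$ against functions of $(W,T)$ to annihilate the $h_P-h_{P_n^*}$ direction (with the defect pairing a $\widetilde{\pi}$-error against a $\mu$-error), let the $\Lambda$-normalization cancel the linear $\beta(P)-\beta(P_n^*)$ term, and bound the surviving products via Condition \ref{cond::limitspTMLEaBounded}, Cauchy--Schwarz, and the $n^{-1/4}$ rates of Condition \ref{cond::limitspTMLEc}. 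The only cosmetic difference is that you carry the strain-missingness probability $g_P$ explicitly in an unconditional expectation, whereas the paper states the orthogonality conditionally on $\Delta=1$.
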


\begin{lemma}
Under the conditions of Theorem \ref{theorem::limitspTMLEMissing}, we have the following exact second-order remainder satisfies
 $$\beta_{adj}(P_{n,adj}^*) - \beta_{adj} + P D_{\widehat P_{n, adj}^*} = o_P(n^{-1/2}).$$
 \label{lemma::secondOrder2}
\end{lemma}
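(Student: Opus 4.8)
The plan is to write the exact second-order remainder $R(P_n^*,P) := \beta_{adj}(P_n^*) - \beta_{adj}(P) + P D_{adj,P_n^*}$ (with $P_n^* := \widehat P_{n,adj}^*$ the adjusted TMLE) as a sum of two genuinely second-order pieces: one quadratic in the mixed nuisance errors $(\Pi,\overline\mu)$ and one quadratic in the adjusted outcome regression $\mu_{adj}$. Working throughout on the case-only subpopulation $\{R=1\}$ and writing $X := (W_T,T,A,W)$, $Z := (A,W,T)$ and the bounded vector weight $g_{P'}(Z) := \{\Lambda_{adj,P'}\underline f(W,T)\}\,H_{adj,P'}(Z)$, I would first evaluate $P D_{adj,P_n^*}$ by iterated expectation. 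Conditioning the inverse-probability factor on $X$ and using $E_P[\Delta(J-\overline\mu_{P_n^*})\mid X] = \Pi_P(X)\big(\overline\mu_P(X) - \overline\mu_{P_n^*}(X)\big)$, which holds by the definition of $\overline\mu_P$, the IPCW component of $P D_{adj,P_n^*}$ becomes $E_P\big[g_{P_n^*}(Z)\,(\Pi_P/\Pi_{P_n^*})(\overline\mu_P - \overline\mu_{P_n^*})\big]$.

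Adding the projection component and invoking the algebraic identity
\[
\frac{\Pi_P}{\Pi_{P_n^*}}(\overline\mu_P - \overline\mu_{P_n^*}) + \overline\mu_{P_n^*} - \mu_{adj,P_n^*} = (\overline\mu_P - \mu_{adj,P_n^*}) + \Big(\tfrac{\Pi_P}{\Pi_{P_n^*}} - 1\Big)(\overline\mu_P - \overline\mu_{P_n^*}),
\]
together with $E_P[\overline\mu_P(X)\mid Z] = \mu_{adj,P}(Z)$ (the definition of $\mu_{adj,P}$), would yield the decomposition
\[
P D_{adj,P_n^*} = E_P\big[g_{P_n^*}(Z)\,(\mu_{adj,P} - \mu_{adj,P_n^*})\big] + \mathrm{Rem}_\Pi,\qquad \mathrm{Rem}_\Pi := E_P\Big[g_{P_n^*}(Z)\big(\tfrac{\Pi_P}{\Pi_{P_n^*}} - 1\big)(\overline\mu_P - \overline\mu_{P_n^*})\Big].
\]
For $\mathrm{Rem}_\Pi$, Conditions \ref{cond::limitspTMLEMissingposBounded } and \ref{cond::limitspTMLEMissingpos} keep $\Pi_{P_n^*}$ bounded away from zero and $g_{P_n^*}$ uniformly bounded, so $x\mapsto \Pi_P/x - 1$ is Lipschitz on the relevant range; hence by Cauchy--Schwarz $|\mathrm{Rem}_\Pi| \lesssim \|\Pi_{P_n^*} - \Pi_P\|\,\|\overline\mu_{P_n^*} - \overline\mu_P\| = o_P(n^{-1/2})$ by Condition \ref{cond::limitspTMLEMissingb}. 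This is exactly where the double robustness in the rates of $\Pi$ and $\overline\mu$ enters and why $\overline\mu$ may be estimated slower than $n^{-1/4}$.

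It then remains to control $\beta_{adj}(P_n^*) - \beta_{adj}(P) + E_P[g_{P_n^*}(Z)(\mu_{adj,P} - \mu_{adj,P_n^*})]$. I would recognize this as precisely the exact remainder of the coefficient parameter of a partially linear logistic regression with outcome probability $\mu_{adj}$, i.e. the same object treated in Lemma \ref{lemma::secondOrder1} with $(\mu,\widetilde\pi)$ replaced by $(\mu_{adj},\pi)$. Because the fluctuation submodel of Section \ref{section::spTMLEMissing} keeps $\mu_{adj,P_n^*}$ inside the partially linear model, a second-order Taylor expansion of the logistic link — using boundedness of $\mu_{adj}$ away from $\{0,1\}$ (Condition \ref{cond::limitspTMLEMissingposBounded }) and invertibility of $E_P[\underline f\,\underline f^T]$ (Condition \ref{cond::limitFeature}) — bounds this term by $O_P(\|\mu_{adj,P_n^*} - \mu_{adj,P}\|^2) = o_P(n^{-1/2})$, the last equality by Condition \ref{cond::limitspTMLEMissingb}. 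Summing the two bounds gives the claim.

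The main obstacle is this final step: verifying that the $\mu_{adj}$-dependent term is genuinely quadratic, i.e. that the first-order contribution of $\mu_{adj,P_n^*} - \mu_{adj,P}$ to $\beta_{adj}(P_n^*) - \beta_{adj}(P) + P D$ cancels. This cancellation is equivalent to $D_{adj}$ being a valid gradient of $\beta_{adj}$ and hinges on the precise form of $\Lambda_{adj}$ and $H_{adj}$; rather than reproving it, I would import it directly from the efficient-influence-function calculation underlying Lemma \ref{lemma::secondOrder1} (van der Laan, 2009) and merely check that the substitution $(\mu,\widetilde\pi)\mapsto(\mu_{adj},\pi)$ and the additional conditioning over $W_T$ leave the quadratic structure intact.
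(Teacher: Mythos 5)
Your proposal is correct and follows essentially the same route as the paper: law of iterated expectations on the IPCW term to produce the doubly robust product $\|\Pi_n-\Pi_P\|\,\|\overline{\mu}_n^*-\overline{\mu}_P\|$ via Cauchy--Schwarz, cancellation of the $\overline{\mu}_n^*$ terms against the projection component to leave $E_P\bigl[g_{P_n^*}(Z)\,(\mu_{adj,P}-\mu_{adj,P_n^*})\bigr]$, and then reduction to the partially linear remainder of Lemma \ref{lemma::secondOrder1} with $(\mu,\widetilde\pi)$ replaced by $(\mu_{adj},\pi)$. The only caveat is that the final bound also carries an $O_P(\|\pi_n-\pi_P\|^2)$ contribution from the estimated $H_{adj}$ (as in Lemma \ref{lemma::secondOrder1}), which is handled by Condition \ref{cond::limitspTMLEMissingb} and is implicitly covered by your deferral to that lemma.
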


\begin{proof}[Proof of Lemma  \ref{lemma::secondOrder1}]

 For ease of notation, let $\beta_{n}:= \beta(P_n^*) $, $h_n := h_{P_n^*}$, $ \mu_n :=  \mu_{ P_n^*}$, and $\sigma_n^2 = \mu_n (1 - \mu_n)$. Let $H_{P}(T,A,W; \mu_n, \widetilde{\pi}_P)$
 
 Let $P_{n,0}^*$ be t
 Furthermore, for $P' \in \mathcal{M}_1$, we define $\sigma^2_{\mu} := \mu \{1 - \mu\}$ and
 $$ H(a,w, t; \mu, \widetilde{\pi} ) :=   \left(a  -\frac{ \widetilde{\pi}(w,t) \} 
}{  \widetilde{\pi}(w,t) \sigma^2_{\mu}(1,w,t)  +  (1-\widetilde{\pi}(w,t)) \sigma^2_{\mu}(1,w,t) }\right),$$
 where we note that $ H(a,w, t; \mu_{P'}, \widetilde{\pi}_{P'} ) = H_{P'}(a,w, t)$. Let $\mathcal{I}_{\Delta}: o \mapsto \delta$ be the coordinate projection identity map for the missingness indicator.

 By the law of iterative expectations, the second order remainder can be written as:
  \begin{align*}
      \beta_{n} - \beta(P) & + P D_{P_n^*} \\
  &  =  \beta_{n,} - \beta(P)  + P\left[\left\{ \mathcal{I}_{\Delta}\Lambda_{P_n^*} \underline{f}  \right\}H_{\widehat P_{n}^*}   \left\{\mu_P   -   {\mu}_{n} \right\}\right] . 
 \end{align*}
Now, recall that
 $$\left[ \mu_P(T,A,W)  - \mu_{n}(T,A,W)\right] = \text{expit}\left\{h_{P}(W,T) + A\underline{f}(W,T)^\top \beta(P)\right\} - \text{expit}\left\{h_n(W,T) + A\underline{f}(W,T)^\top \beta_n \right\}.$$
 Since $h_n$, $\underline{f}$ and $\beta_n$ are uniformly bounded by Conditions  \ref{cond::limitFeature} and  \ref{cond::limitspTMLEaBounded}, we have the second-order Taylor expansion:
 \begin{align*}
     \left[ \mu_P(T,A,W)  - \mu_{n}(T,A,W)\right] &= \sigma_n^2(A,T,W)\left[h_{P}(W,T) - h_n(W,T) + A\underline{f}(W,T)^\top (\beta(P)- \beta_n)\right] \\
     & \quad + O(|h_{P}(W,T) - h_n(W,T)|^2 + \norm{\beta(P)- \beta_n}^2).
 \end{align*}
 
In the following we will abuse notation and implicitly condition on the training data used to obtain the nuisance estimators. On the pages 621-622 and 626-629 of the working paper \cite{OddsRatioreadingsTMLE}, the following orthogonality condition was derived: 
 $$E_P[\sigma^2_n(A,T,W)\cdot H(T,A,W; \mu_n, \widetilde{\pi}_P) \cdot h(W,T) \mid \Delta =1, T, W] = 0 \text{ for all } h \in L^2(P_{T,W}).$$
  It follows that 
  \begin{align*}
      &\hspace{-1cm} E_P[\sigma^2_n(A,T,W)\cdot H(T,A,W; \mu_n, \widetilde{\pi}_n) \cdot h(W,T) \mid \Delta =1, T, W]  \\
      &= E_P[\sigma^2_n(A,T,W)\cdot \left\{H(T,A,W; \mu_n, \widetilde{\pi}_n)  - H(T,A,W; \mu_n, \widetilde{\pi}_P) \right\}\cdot h(W,T) \mid \Delta =1, T, W]
  \end{align*} 
  which is $O(|\widetilde{\pi}_n(W,T) - \widetilde{\pi}_P(W,T)|)$ under Condition \ref{cond::limitspTMLEaBounded}. Taking $h = h_{P} - h_n$ in the orthogonality condition and plugging in the Taylor expansion, it follows that 
  \begin{align*}
   & \hspace{-1cm}  E\left\{ \Lambda_{P_n^*} \underline{f}(W,T) \right\}H_{\widehat P_{n}^*}(T,A,W)  \left[\mu_P(T,A,W)  -   {\mu}_{n}(T,A,W) \mid \Delta =1, T, W\right] \\
     &= \sigma_n^2(A,T,W)\left[ A\underline{f}(W,T)^\top (\beta(P)- \beta_n)\right] \\
     & \quad + O(|h_{P}(W,T) - h_n(W,T)||\widetilde{\pi}_n(W,T) - \widetilde{\pi}_P(W,T)|) +\\
     & \quad + O(|h_{P}(W,T) - h_n(W,T)|^2 + \norm{\beta(P)- \beta_n}^2).
 \end{align*}
 It follows that second-order remainder is equal to
 \begin{align*}
   &=  \beta_n - \beta(P) + (\beta(P)- \beta_n) E_P \left[ \Delta A \Lambda_{P_n^*} \underline{f}(W,T)\underline{f}(W,T)^\top H(T, A, W; \mu_n, \widetilde \pi_n)  \sigma_n^2(A,T,W)\right]\\
   & \quad + O_P\left(\norm{\beta(P)- \beta_n}^2 + \norm{\widetilde{\pi}_n - \widetilde{\pi}_P}^2 + \norm{h_{P} - h_n}^2\right).
 \end{align*}
Performing another Taylor expansion, we find the remainder also equals
 \begin{align*}
    &= \beta_n - \beta(P)+ (\beta(P)- \beta_n) E_P \left[ \Delta A \Lambda_{P} \underline{f}(W,T)\underline{f}(W,T)^\top H_{P}(T, A, W)  \sigma_P^2(A,T,W)\right]\\
     & \quad + O_P\left(\norm{\beta(P)- \beta_n}^2 + \norm{\widetilde{\pi}_n - \widetilde{\pi}_P}^2 + \norm{h_{P} - h_n}^2\right).
 \end{align*}
Computing the expectation, we can show that
$$E_P \left[ \Delta A \Lambda_{P} \underline{f}(W,T)\underline{f}(W,T)^\top H_{P}(T, A, W)  \sigma_P^2(A,T,W)\right] = 1
$$
which causes the leading term of the previous display to vanish. Thus, the second-order remainder
 \begin{align*}
    = O_P\left(\norm{\beta(P)- \beta_n}^2 + \norm{\widetilde{\pi}_n - \widetilde{\pi}_P}^2 + \norm{h_{P} - h_n}^2\right).
 \end{align*}
 We claim that 
 $$\norm{h_{P} - h_n}^2 = O_p(\norm{\mu_n(a=0, \cdot) - \mu_P(a=0, \cdot)}^2)$$ and 
 $$\norm{\beta_{n} - \beta(P)}^2 = O_p\left(\max_{a \in \{0,1\}}\norm{\mu_n(a, \cdot) - \mu_P(a, \cdot)}^2\right).$$ 
 If these claims are hold, Condition \ref{cond::limitspTMLEc} implies that $\norm{h_{P} - h_n}^2$ and $\norm{\beta_{n} - \beta(P)}^2 $ are $o_P(n^{-1/2})$. This would imply the same for the second-order remainder, thereby completing the proof. The first claim holds since 
$$\left|h_{P}(w) - h_n(w)  \right| = \left|\logit\mu_n(a=0,w) - \logit \mu_P(a=0,w)  \right| \lessapprox \left| \mu_n(a=0,w) - \mu_P(a=0,w)\right|,$$
where the final inequality applies Condition  \ref{cond::limitspTMLEaBounded} and Lipschitz continuity of the logit map on $[\delta, 1-\delta] \subset (0,1)$.
For the second claim, we have
$$ \underline{f}(\cdot)^T(\beta_n - \beta(P))  =  \left\{ \logit\mu_n(a=1,\cdot) - \logit\mu_P(a=1,\cdot) \right\} - \left\{ \logit\mu_n(a=0,\cdot) - \logit\mu_P(a=0,\cdot) \right\}.$$
Thus, using Lipschitz continuity of the logit function, squaring, and taking expectations, we find
$$ E\left|\underline{f}(\cdot)^T(\beta_n - \beta(P)) \right|^2  \lessapprox  \max_{a \in \{0,1\}} \norm{\mu_n(a, \cdot) - \mu_P(a, \cdot)}^2.$$
 Next, observe that $|\underline{f}(\cdot)^T(\beta_n - \beta(P))|^2  =    |(\beta_n - \beta(P))^T\underline{f}(\cdot)\underline{f}(\cdot)^T(\beta_n - \beta(P))|.$ Thus,
 $$ E\left|\underline{f}(\cdot)^T(\beta_n - \beta(P)) \right|^2 = |(\beta_n - \beta(P))^TE\left[\underline{f}(\cdot)\underline{f}(\cdot)^T\right](\beta_n - \beta(P))|.  $$
 Since $E\left[\underline{f}(\cdot)\underline{f}(\cdot)^T\right]$ is necessarily positive definite and also invertible by Condition   \ref{cond::limitFeature}, we have 
$$ |(\beta_n - \beta(P))^TE\left[\underline{f}(\cdot)\underline{f}(\cdot)^T\right](\beta_n - \beta(P))| \geq \gamma \norm{\beta_n - \beta(P)}_2^2, $$
where $ \gamma$ is smallest eigenvalue of  $E\left[\underline{f}(\cdot)\underline{f}(\cdot)^T\right]$. Putting it all together, we conclude that
$$ \norm{\beta_n - \beta(P)}_2^2  = O_P\left( \max_{a \in \{0,1\}} \norm{\mu_n(a, \cdot) - \mu_P(a, \cdot)}^2\right).$$

\end{proof}

\begin{proof}[Proof of Lemma  \ref{lemma::secondOrder2}]
For ease of notation, we drop the dependence on $adj$ in some of our notation. For example, we simply write $P_n^*$ instead of $P_{n,adj}^*$. We have
    \begin{align*}
  &   D_{ P_{n}^*, adj}(O)=     \frac{\Delta}{\Pi_n(W_T,A,W)} \left\{\Lambda_{adj,P_n^*} \underline{f}(W,T) \right\}   H_{adj,P_n^*}(T,A,W )  \left[ J  -  \overline{\mu}_n^*(W_T,T,A,W) \right] \\
& +   \left\{ \Lambda_{adj,P_n^*} \underline{f}(W,T) \right\}H_{adj,P_n^*}(T,A,W)  \left[\overline{\mu}_n^*(W_T,T,A,W)  -  {\mu}_{n}^*(T,A,W) \right].
    \end{align*}
We will also implicitly condition on the training data used to obtain the nuisance estimators when writing expectations of the form $E[\cdot]$, so that expectations only average over the random variable $O = (\Delta J, \Delta, W_T, T, A, W)$.

 Taking the expectation over $O$ and applying the law of iterated expectations twice, we find

\begin{align*}
  & E \left[ \frac{\Delta}{\Pi_n(W_T,A,W)} \left\{\Lambda_{adj,P_n^*} \underline{f}(W,T) \right\}   H_{adj,P_n^*}(T,A,W )  \left[ J  -  \overline{\mu}_n^*(W_T,T,A,W) \right] \right] \\
& =   E \left[ \frac{\Delta}{\Pi_n(W_T,A,W)} \left\{\Lambda_{adj,P_n^*} \underline{f}(W,T) \right\}   H_{adj,P_n^*}(T,A,W )  \left[ \overline{\mu}(W_T, T, A ,W)  -  \overline{\mu}_n^*(W_T,T,A,W) \right] \right]\\
& =   E \left[ \frac{\Pi(W_T,A,W)}{\Pi_n(W_T,A,W)} \left\{\Lambda_{adj,P_n^*} \underline{f}(W,T) \right\}   H_{adj,P_n^*}(T,A,W )  \left[ \overline{\mu}(W_T, T, A ,W)  -  \overline{\mu}_n^*(W_T,T,A,W) \right] \right].
\end{align*}
Thus, we have 
\begin{align*}
    &E \left[ \frac{\Delta}{\Pi_n(W_T,A,W)} \left\{\Lambda_{adj,P_n^*} \underline{f}(W,T) \right\}   H_{adj,P_n^*}(T,A,W )  \left[ J  -  \overline{\mu}_n^*(W_T,T,A,W) \right] \right]\\ &=   E \left[ \left[1 - \frac{\Pi(W_T,A,W)}{\Pi_n(W_T,A,W)} \right] \left\{\Lambda_{adj,P_n^*} \underline{f}(W,T) \right\}   H_{adj,P_n^*}(T,A,W )  \left[ \overline{\mu}(W_T, T, A ,W)  -  \overline{\mu}_n^*(W_T,T,A,W) \right] \right] \\
    &\quad +  E \left[  \left\{\Lambda_{adj,P_n^*} \underline{f}(W,T) \right\}   H_{adj,P_n^*}(T,A,W )  \left[ \overline{\mu}_P(W_T, T, A ,W)  -  \overline{\mu}_n^*(W_T,T,A,W) \right] \right].
\end{align*}
By Condition \ref{cond::limitspTMLEMissingpos} and Cauchy-Schwarz, the first term on the RHS is $O_P(\norm{\Pi_n - \Pi}\norm{\overline{\mu}_n^* - \overline{\mu}_P})$. Thus, returning to our first display, we find
 \begin{align*}
    P D_{;\widehat P_{n}^*, adj} &= 
      E \left[  \left\{\Lambda_{adj,P_n^*} \underline{f}(W,T) \right\}   H_{adj,P_n^*}(T,A,W )  \left[ \overline{\mu}_P(W_T, T, A ,W)  -  \overline{\mu}_n^*(W_T,T,A,W) \right] \right]\\
     & \quad \quad  + O_P(\norm{\Pi_n - \Pi}\norm{\overline{\mu}_n^* - \overline{\mu}_P})  \\
     & \quad \quad +  E \left[\left\{ \Lambda_{adj,P_n^*} \underline{f}(W,T) \right\}H_{adj,P_n^*}(T,A,W)  \left[\overline{\mu}_n^*(W_T,T,A,W)  -  {\mu}_{n}^*(T,A,W) \right]\right]
    \\
   &=      E\left\{ \Lambda_{adj,P_n^*} \underline{f}(W,T) \right\}H_{adj,P_n^*}(T,A,W)  \left[\overline{\mu}_P(W_T,T,A,W)  -  \mu_n^*(T,A,W) \right]\\
   & \quad + O_P(\norm{\Pi_n - \Pi}\norm{\overline{\mu}_n^* - \overline{\mu}_P}) \\
    &=        E\left\{ \Lambda_{adj,P_n^*} \underline{f}(W,T) \right\}H_{adj,P_n^*}(T,A,W)  \left[\mu_P(T,A,W)  -   {\mu}_n^*(T,A,W) \right]\\
     & \quad + O_P(\norm{\Pi_n - \Pi}\norm{\overline{\mu}_n^* - \overline{\mu}_P}) ,
    \end{align*}
where the final equality follows from the law of iterated conditional expectations.  
Since Condition \ref{cond::limitspTMLEMissingb} implies $O_P(\norm{\Pi_n - \Pi}\norm{\overline{\mu}_n^* - \overline{\mu}_P}) = o_P(n^{-1/2})$, it suffices to show that
 \begin{align*}
    \hspace{-0.5cm} & \beta_{adj}(P_n^*) - \beta_{adj}(P) + E\left\{ \Lambda_{adj,P_n^*} \underline{f}(W,T) \right\}H_{adj,P_n^*}(T,A,W)  \left[\mu_P(T,A,W) -   {\mu}_n^*(T,A,W) \right] \\
      &= \beta_{adj}(P_n^*) - \beta_{adj}(P) + E\left\{ \Lambda_{adj,P_n^*} \underline{f}(W,T) \right\}H_{adj,P_n^*}(T,A,W)  \left[J  -   {\mu}_n^*(T,A,W) \right] \\
      &= o_P(n^{-1/2}),
    \end{align*}
This follows noting that the RHS is of the same form as the second-order remainder of Lemma \ref{lemma::secondOrder1}. A proof identical to that of Lemma  \ref{lemma::secondOrder1} establishes the desired result.

\end{proof}

\end{document}